\providecommand{\ud}{\,\mathrm{d}}
\providecommand{\abs}[1]{\left\lvert#1\right\rvert}
\providecommand{\bra}[1]{\left\langle#1\right\vert}
\providecommand{\ket}[1]{\left\vert#1\right\rangle}
\providecommand{\hprod}[2]{\left\langle#1\,\vert\,#2\right\rangle}
\providecommand{\bprod}[2]{\left(#1\,\vert\,#2\right)}
\providecommand{\vect}[1]{{\boldsymbol{#1}}}
\providecommand{\norm}[1]{\left\lVert#1\right\rVert}
\providecommand{\suchthat}{\,:\,}
\providecommand{\evalat}{\Big\vert}
\providecommand{\conj}[1]{\overline{#1}}
\providecommand{\trace}{\mathrm{tr}}
\providecommand{\R}{{\mathbb R}}
\providecommand{\Z}{{\mathbb Z}}
\providecommand{\N}{{\mathbb N}}
\providecommand{\ii}{\mathbbm{i}}
\providecommand{\basespacet}{\R^3\times\{-\frac12,+\frac12\}}
\providecommand{\basespacem}{\R^3\times\left\{-\mbox{$\frac12$},+\mbox{$\frac12$}\right\}}
\newtheorem{definition}{Definition}
\newtheorem{theorem}[definition]{Theorem}
\newtheorem{proposition}[definition]{Proposition}
\begin{document}

\title{Efficient Algorithm for Asymptotics-Based Configuration-Interaction Methods and Electronic Structure of Transition Metal Atoms}

\author{Christian B. Mendl$^1$, Gero Friesecke$^1$}
\affiliation{$^1$Center for Mathematics, TU Munich}

\date{\today}

\begin{abstract}
\small
Asymptotics-based configuration-interaction (CI) methods [G.~Friesecke and B.~D.~Goddard,
Multiscale Model. Simul.~\textbf{7}, 1876 (2009)]\nocite{CI2009} are a class of CI methods for atoms which reproduce,
at fixed finite subspace dimension, the exact Schr\"odinger eigenstates in the limit of fixed electron number
and large nuclear charge. Here we develop, implement, and apply to $3d$~transition metal atoms an efficient and
accurate algorithm for asymptotics-based CI.

Efficiency gains come from exact (symbolic) decomposition of the CI space into irreducible symmetry subspaces
at essentially linear computational cost in the number of radial subshells
with fixed angular momentum, use of reduced density matrices in order to avoid having to store
wavefunctions, and use of Slater-type orbitals (STO's).
The required Coulomb integrals for STO's are evaluated in closed form, with the help of Hankel matrices, Fourier analysis, 
and residue calculus.

Applications to $3d$~transition metal atoms are in good agreement with experimental
data. In particular we reproduce the anomalous magnetic moment and orbital filling
of Chromium in the otherwise regular series Ca, Sc, Ti, V, Cr.
\end{abstract}

\pacs{31.15.ve, 31.15.vj, 02.70.Wz, 31.15.-p, 32.30.-r}

\maketitle

\section{Introduction}
\label{sec:Introduction}

The search for accurate computational methods for the $N$-electron Schr\"odinger equation at moderate computational 
cost has been a focus of activity for several decades\cite{KohnSham65,SzaboOstlund,HJO00,Mazziotti2007}. The present
article is a contribution to one part of the picture, wavefunction methods for atoms. We develop, implement, and apply
to transition metal atoms an algorithmic framework which renders asymptotics-based Configuration-Interaction~(CI)
computations for atoms with basis sets of up to $50$ one-electron spin orbitals, up to $30$ electrons, and full resolution
of all valence electron correlations feasible.
An attractive feature of our framework is that many steps are done \emph{symbolically}, by building upon, 
systematizing, and automatizing the paper-and-pencil analysis of asymptotics based CI for small atoms and minimal 
bases in Ref.~\onlinecite{CI2009}. A Matlab/Mathematica implementation is available at Ref.~\onlinecite{FermiFab}.

CI methods\cite{Lowdin1955,HJO00} approximate the electronic Schr\"odinger equation by projecting it onto a well 
chosen subspace spanned by Slater determinants. More precisely, the Schr\"odinger equation for an atom or ion with 
$N$ electrons is
\begin{equation}
H \psi = E \psi, \qquad \psi \in L_a^2\left(\left(\basespacem\right)^N\right)
\label{eq:Schroedinger}
\end{equation}
where $H$ is the Hamiltonian of the system, see~\eqref{eq:Hamiltonian} below, $\psi$ the wavefunction and $E$ the energy. 
The wavefunction $\psi = \psi(x_1,s_1,\dots,x_N,s_N)$ depends on the positions $\vect{x}_i\in\R^3$ and 
spins $s_i\in\{-\frac12,+\frac12\}$ of all electrons, and belongs to the space 
$L_a^2((\basespacet)^N)$ of square-integrable, antisymmetric functions on 
$(\basespacet)^N$. A CI method is an approximation of~\eqref{eq:Schroedinger} by an equation of form
\begin{equation}
\begin{split}
P H P\Psi &= E\Psi, \quad \psi \in V\subset L_a^2\left(\left(\basespacem\right)^N\right)\\
P &=\text{ orthogonal projector onto }V,
\end{split}
\label{eq:CI}
\end{equation}
where $V$ is a Span of a finite number of Slater determinants $\ket{\chi_{i_1}\cdots\chi_{i_N}}$ built from a finite 
number of spin orbitals $\{\chi_1,\dots,\chi_K\}\subset L^2(\basespacet)$. We recall the well known
\emph{fundamental difficulty of CI methods}: Eq.~\eqref{eq:Schroedinger} is a partial differential equation in very high space dimension,
e.g. dimension 72 in case of a single Chromium atom as treated in this paper. Hence when discretizing the single-electron state space
by a reasonable number of spin orbitals, $L^2(\basespacet)\approx \mathrm{Span}\,\{\chi_1,..,\chi_K\}$, the ensuing natural choice 
$V=\mathrm{Span}\, \{ \ket{\chi_{i_1}\cdots\chi_{i_N}} \suchthat 1\le i_1<\dots<i_N\le K\}$ (full CI) has a prohibitively
large dimension, $\binom{K}{N}$.

Our principal contribution here is the development of an efficient algorithm that minimizes the curse of dimension. 
The main savings come from exact (i.e. symbolic) and efficiently automated exploitation of symmetry to perform dimension reduction.
Other ingredients are use of reduced density matrices in order to avoid having to store wavefunctions, and use of Slater-type
orbitals (STO's) including exact orthonormalization and Coulomb integral evaluation. The algorithm has been implemented for
a recent variant of CI, asymptotics-based CI\cite{CI2009}, which exploits the asymptotic results in Ref.~\onlinecite{FG2010} and has the
attractive features that the CI subspace, if its dimension
is $K$, reproduces correctly the first $K$ Schr\"odinger eigenstates in the limit of fixed $K$, fixed electron number, and
large nuclear charge $Z$. (This limit, which has a large literature (see in particular Ref.~\onlinecite{Layzer1959,Davidson1996}) captures the physical environment of inner shell electrons, and has the 
\emph{multiscale} property that the ratio of first spectral gap to ground state energy of the Schr\"odinger equation tends to zero\cite{CI2009},
with the experimental ratio for true atoms being very close to zero, about 1 part in 1000 for Carbon and Oxygen
and 1 part in 30 000 for Cr and Fe.) The main part of the algorithm, automated symmetry reduction, can be 
easily adapted to other CI methods and orbitals (such as Gaussians). 
\smallskip

As a typical application, we treat here the transition metal series Ca, Sc, Ti, V, Cr, 
modelled by $18$ core electrons occupying Slater orbitals of type $1s$ to $3p$, and an active space consisting of 
$3d$, $4s$, $4p$, $4d$ Slater orbitals (of either spin) accommodating the two to six valence electrons. The resulting CI space $V$ 
for Cr has dimension $d = \binom{28}{6} = 376 740$, 
and the CI Hamiltonian has $d(d+1)/2\approx 7\times 10^{10}$ entries.
But automated symmetry reduction shows (see Table~\ref{tab:GroundstateEnergies}) that only 14
basis functions contribute to the experimental
ground state configuration and symmetry, [Ar]$4s^13d^5$ $^7S$, allowing to evaluate the ensuing eigenvalues and -states easily
and to machine precision. Our results, detailed in Section~\ref{sec:Applications} below, provide an ab initio explanation 
of the anomalous magnetic moment of Chromium (experimentally, the ground state has six instead of the expected four 
aligned spins) and the underlying anomaly in the filling order of $3d$ versus $4s$ orbitals in the semi-empirical orbital picture of transition metal 
atoms (Chromium, unlike its four predecessors Ca, Sc, Ti, V, possesses only one instead of two $4s$ electrons). It is well
known\cite{MelroseScerri,Kagawa,HarrisJones,Yanagisawa} that single-determinant Hartree-Fock, relativistic Hartree-Fock, and density functional theory calculations 
(even with the best exchange-correlation functionals such as B3LYP) render the correct filling orders and ground state
symmetries only for some but not all transition metal elements (see Section~\ref{sec:Applications}). 
\smallskip

In the remainder of the Introduction we describe our algorithm for \emph{exact (symbolic), efficient symmetry partitioning}.
The (non-relativistic, Born-Oppenheimer) Hamiltonian 
\begin{equation}
H = \sum_{i=1}^N \left(-\frac{1}{2} \Delta_{\vect{x}_i} - \frac{Z}{\abs{\vect{x}_i}}\right) + \sum_{1 \le i < j \le N} \frac{1}{\abs{\vect{x}_i-\vect{x}_j}}
\label{eq:Hamiltonian}
\end{equation}
governing atoms/ions with $N$ electrons and nuclear charge $Z$ has the symmetry group
\begin{equation}
SU(2) \times SO(3) \times \Z_2,
\label{eq:SymGroup}
\end{equation}
consisting of simultaneous rotation of electron spins, and simultaneous rotation and sign reversal of electron positions. This leads to the well known conservation law that the Hamiltonian leaves the simultaneous eigenspaces of the spin, angular momentum and parity operators
\begin{equation}
\vect{L}^2, \ L_z, \ \vect{S}^2, \ S_z, \ \hat{R}
\label{eq:simLS}
\end{equation}
invariant (see Section~\ref{sec:Symmetries} for precise definitions of these operators). 
The fact that partitioning into symmetry subspaces significantly lowers computational costs has long been known to, 
and exploited by, theorists (see e.g. Ref.~\onlinecite{Tayloretal}). 
A striking example is the paper-and-pencil symmetry decomposition\cite{NuclearChargeLimit2009,FG2010}
of a minimal asymptotics-based CI Hamiltonian $P H P$ for the second 
period atoms He to Ne, with active space consisting of the eight $2s$ and $2p$ spin-orbitals accommodating the valence electrons. 
For Carbon, there are four valence electrons, 
so the active space has dimension $\binom{8}{4} = 70$, and the CI Hamiltonian is a $70\times 70$ matrix. But due to symmetry
it decomposes into fifteen $2\times 2$ blocks and fourty $1\times 1$ blocks.

The main algorithmic steps which automate such decompositions are as follows.

(a) One starts by partitioning the CI space into configurations, i.e., subspaces like $1s^{n_1} 2s^{n_2} 2p^{n_3}\dots$ with a fixed number $n_i$ of 
electrons in each subshell (see Section~\ref{sec:FrameworkConfigurations} below).
It suffices to symmetry-decompose each configuration, because the symmetry group, unlike the Hamiltonian, leaves each configuration invariant individually.

(b) Each configuration is isomorphic to a \emph{non-antisymmetrized} tensor product of lower-dimensional factors.
The tensor factors consist of single $1s,\,2s,\,2p,\dots$ subshells. See Section~\ref{sec:FrameworkConfigurations}. 
This product structure is essential for Step (d) below.

(c) The splitting up of each factor into simultaneous eigenspaces of the symmetry operators~\eqref{eq:simLS} is done via a suitable algorithm from the 
mathematics literature for simultaneous diagonalization of commuting matrices, for instance that of Bunse-Gerstnert, Byers and Mehrmann\cite{SimDiag1993}. (We are indebted to Folkmar Bornemann for helpful advice regarding this step.) Exact eigenstates are recovered from the numerical eigenstates through exploiting that the squares of the eigenstate coefficients are, by representation theory, rational numbers.

(d) Given simultaneous eigenstates of the symmetry operators for each factor, simultaneous eigenstates of a two-factor tensor product are known explicitly in terms of the well-known Clebsch-Gordan coefficients, and those for a many-factor tensor product are easily obtained by iteration of the Clebsch-Gordan formulae. This yields the desired decomposition of each configuration.

A key feature of the algorithm (a), (b), (c), (d) is the computational cost grows 
only~\emph{linearly} with the number of subshells, provided the angular momentum cutoff is held fixed. Thus, say, the cost of including $s$ orbitals of 
type $1s,2s,\dots,ns$ is only $\mathcal{O}(n)$. See Section~\ref{sec:Costs}.
\smallskip

The structure of this paper is as follows. In Section~\ref{sec:Framework} we briefly review asymptotics-based CI.
Section~\ref{sec:Reduction} contains the main contribution of this paper, 
namely exact reduction steps leading to significant savings of computational time and memory storage. 
In Section~\ref{sec:Orbitals} treat orthonormalization and Coulomb integral evaluation for general atomic Slater-type orbitals. 
We summarize all algorithmic steps in Section~\ref{sec:Assembly}, and carefully estimate the costs in Section~\ref{sec:Costs}. 
Finally, in the last section we apply the algorithmic framework to the electronic structure of potassium, calcium
and the transition metals scandium to zinc.

\section{Asymptotics-based CI}
\label{sec:Framework}

We briefly recall the set-up and features relevant to the present work, referring to Refs.~\onlinecite{CI2009,NuclearChargeLimit2009} for further information. 

\subsection{Symmetries}
\label{sec:Symmetries}

Due to invariance of the Hamiltonian~\eqref{eq:Hamiltonian} under the symmetry group~\eqref{eq:SymGroup}, 
the set of operators~\eqref{eq:simLS} commutes with the Hamiltonian and with each other, for arbitrary $N$ and $Z$. 
These operators play an important role in our algorithmic framework. 
Here and below we use the standard notation $\vect{L} = \sum_{i=1}^N \vect{L}(i)$ (many-body angular momentum operator), 
$\vect{L}(i) = \vect{x}_i\wedge\frac{1}{\ii}\nabla_{\vect{x}_i}$ (angular momentum operator acting on the position coordinates $\vect{x}_i\in\R^3$ of
the $i$th electron), $L_x,\,L_y,\,L_z$ (components of $\vect{L}$), and analogously for spin (see e.g. Ref.~\onlinecite{NuclearChargeLimit2009}).
The parity operator
$\psi(\vect{x}_1,s_1,\dots,\vect{x}_N,s_N)\mapsto\psi(-\vect{x}_1,s_1,\dots,-\vect{x}_N,s_N)$ is denoted by $\hat{R}$. 

\subsection{Configurations}
\label{sec:FrameworkConfigurations}

Our treatment of symmetry reduction is independent of the particular orbitals used, and works within the context
of general $N$-electron \emph{configurations} as introduced in Ref.~\onlinecite[definition~2.2]{CI2009}: Let
\begin{equation}
V_1,\, V_2, \dots \subset L^2\left(\basespacem\right)
\label{eq:IrredRepLS}
\end{equation}
be any collection of mutually orthogonal subspaces of the single-electron Hilbert space, which are irreducible 
representation spaces for the joint spin and angular momentum algebra 
$\mathrm{Span}\,\{L_x, L_y, L_z, S_x, S_y, S_z\}$ 
(or, equivalently, which are joint eigenspaces of $\vect{L}^2$ and $\vect{S}^2$ with minimal dimension $(2\ell+1)(2s+1)$ 
given the respective eigenvalues $\ell(\ell+1)$ and $s(s+1)$).
Then, a \emph{configuration} of an $N$-electron atom or ion is a subspace of the antisymmetrized $N$-electron state space 
$L_a^2((\basespacet)^N)$ of the following form:
\begin{equation*}
\label{eq:Configs}
\begin{split}
&\mathcal{C}^{d_1,\dots,d_k} = \mathrm{Span}\left\{\ket{\chi_1,\dots,\chi_N} \suchthat \{\chi_1,\dots,\chi_N\}\text{ any ON}\right.\\
&\left.\text{set in }L^2\left(\basespacem\right)\text{ with }\sharp\left\{i \suchthat \chi_i \in V_j\right\} = d_j\right\}
\end{split}
\end{equation*}
where $(d_1,\dots,d_k) = \vect{d}$ is a partition of $N$ (i.e.\ $0\le d_j \le \dim\,V_j,\, \sum_j d_j = N$). Physically, the $V_j$ are
subshells and the $d_j$ are occupation numbers.
The main point is that \emph{all} choices of the $\chi_i$'s consistent with the requirement that a fixed number of them have to be 
picked from each $V_j$ have to be included. 

Configurations, unlike general subspaces of $L_a^2((\basespacet)^N)$ spanned by a basis of 
Slater determinants, are invariant under the symmetry group~\eqref{eq:SymGroup} and its generators $\vect{L}$, $\vect{S}$ and 
$\hat{R}$, and in particular under the operators~\eqref{eq:simLS}. The same holds for multi-configuration subspaces
\begin{equation}
V = \mathrm{Span}\left\{\mathcal{C}^{\vect{d}^{(1)}},\dots,\mathcal{C}^{\vect{d}^{(M)}}\right\}
\label{eq:CIspace}
\end{equation}
where each $C^{\vect{d}^{(i)}}$ is a configuration. 

\subsection{Asymptotics-based selection of configurations}
\label{sec:AsymptoticsBased}

Eq.~\eqref{eq:CIspace} still leaves a great deal of freedom for the precise specification of the CI subspace $V$.
In asymptotics-based CI\cite{CI2009}, the traditional step of an intermediate Hartree-Fock calculation to determine 
orbitals is replaced by the theoretical requirement that the ansatz space reproduce correctly the lowest
Schr\"odinger eigenstates in the iso-electronic limit $Z\to\infty$ (see Theorem~\ref{CItheorem} below). This\\
-- requires Slater type orbitals (STO's) instead of the asymptotically inexact linear combinations of Gaussians which are common in molecular calculations,\\
-- and corresponds to full CI in an active space for the valence electrons (instead of truncating valence electron correlations in terms of order of
excitation with respect to a reference determinant as in double-excited CI, or nonlinearly approximating them as in coupled cluster theory).

Asymptotics-based CI preserves the spin and angular momentum symmetries of the original Hamiltonian, and obeys the virial theorem, 
by determining orbital dilation parameters self-consistently for the actual CI wavefunctions 
instead of precomputing them via a Hartree-Fock calculation. (The fact that methods with self-consistent
dilation parameters always obey the virial theorem was pointed out by L\"owdin\cite{Lowdin1959}.)

The specific asymptotics-based CI model for atoms used in this paper is as follows.
\begin{itemize}
\item[(A)] (Choice of a parametrized, asymptotically exact family of subspaces)
We specify the orbital spaces in Eq.~\eqref{eq:IrredRepLS} as 
\begin{equation*}
\begin{split}
V_{n\ell}^{\vect{Z}} &:= \mathrm{Span}\left\{\psi_{n \ell m}\!\uparrow, \psi_{n \ell m}\!\downarrow\right\}_{m=-\ell\dots\ell},\\
n &= 1,2,\dots, \quad \ell = 0,\dots,n-1
\end{split}
\end{equation*}
with orthonormal Slater (or hydrogen-like) orbitals
\begin{equation*}
\psi_{n \ell m}(\vect{x}) = r^\ell\,Y_{\ell m}(\vartheta,\varphi)\,p_{n\ell}\left(Z_{1\ell},\dots,Z_{n\ell},r\right)\,\mathrm{e}^{-\frac{Z_{n\ell}}{n} r},
\end{equation*}
$r = \abs{\vect{x}}$, and polynomials $p_{n\ell}(Z_{1\ell},\dots,Z_{n\ell},\cdot)$ of order $n-\ell-1$, 
see equation~\eqref{eq:Orbitals}. Here, $\vect{Z} = (Z_{1,0},Z_{2,0},Z_{2,1},\dots)$ is a vector 
of dilation parameters $Z_{n\ell}>0$. We then set 
\begin{equation*}
V^{\vect{Z}} := \mathrm{Span}\left\{\bigcup_{\vect{d}\in\mathcal{D}} \mathcal{C}^{\vect{d}}\right\}
\end{equation*}
where $\vect{d} = (d_{n\ell})$, $n = 1,2,\dots$, $\ell = 0,\dots,n-1$ is a vector of occupation numbers which sum to $N$, and ${\mathcal D}$ is a finite set
of such vectors such that 
\begin{enumerate}
\item[(i)] $0\le d_{n\ell}\le \dim\,V_{n\ell}^{\vect{Z}} = 2\cdot (2\ell + 1)$.
\end{enumerate}
Prototypical is the set ${\mathcal{D}}$ consisting of all configurations such that,
with respect to alphabetical ordering of the indices $(n,\ell)$,
\begin{enumerate}
\item[(ii)] $d_{n\ell}=0$ for $(n,\ell) > (n,\ell)_{\max}$
\item[(iii)] $d_{n\ell}=2\cdot(2\ell+1)$ for $(n,\ell) \le (n,\ell)_{\min}$.
\end{enumerate}
Here~(ii) is a cutoff condition, and~(iii) says that all subshells up to $(n,\ell)_{\min}$ are completely filled.

\item[(B)] (Subspace eigenvalue problem) For each symmetry subspace
\begin{equation*}
\begin{split}
V_{\ell s p}^{\vect{Z}} &:= \left\{\psi \in V^{\vect{Z}} \suchthat \vect{L}^2 \psi = \ell(\ell+1)\psi,\right\}\\
&\left.\vect{S}^2 \psi = s(s+1)\psi,\ \hat{R}\,\psi = p\,\psi\right\}
\end{split}
\end{equation*}
of $V^{\vect{Z}}$ (angular momentum, spin and parity quantum numbers $\ell$, $s$ and $p$, respectively), $E^{\mathrm{CI},\vect{Z}}_j$ := eigenvalues of $P^{\vect{Z}}\,H\,P^{\vect{Z}}$ on $V_{\ell s p}^{\vect{Z}}$, $\psi^{\mathrm{CI},\vect{Z}}_j$ := corresponding orthonormal eigenstates, where $P^{\vect{Z}}$ = orthogonal projector of $L_a^2((\basespacet)^N)$ onto $V^{\vect{Z}}$.
\item[(C)] (Variational parameter determination) For each symmetry subspace $V_{\ell s p}^{\vect{Z}} \subset V^{\vect{Z}}$, $\vect{Z}_\ast := \mathrm{argmin}_\vect{Z}(\min_j E^{\mathrm{CI},\vect{Z}}_j)$, $E^{\mathrm{CI}}_j := E^{\mathrm{CI},\vect{Z}_\ast}_j$, $\psi^{\mathrm{CI}}_j := \psi^{\mathrm{CI},\vect{Z}_\ast}_j$.
\end{itemize}
Here $\mathrm{argmin}_x f(x)$ denotes a minimizer of the functional $f$. We remark that minimizing dilation parameters $\vect{Z}$ are expected to exist provided the nuclear charge $Z$ is greater or equal to the number $N$ of electrons (in which case the full Rayleigh-Ritz variational principle possesses a minimizer~\cite{Zhislin1960}). In our numerical computations we always found this to be the case. 

Also, for future reference we define
\begin{align*}
\begin{split}
c^{\mathrm{CI}} &:= \text{number of core spin-orbitals of the CI model}\\
&= \sum_{(n,\ell) \le (n,\ell)_{\min}} 2(2\ell+1),
\end{split}\\
\begin{split}
t^{\mathrm{CI}} &:= \text{total number of spin-orbitals of the CI model}\\
&= \sum_{(n,\ell) \le (n,\ell)_{\max}} 2\cdot(2\ell+1).
\end{split}
\end{align*}
Of course, the model only makes sense (i.e., the space $V$ is nonempty) provided the cutoffs $(n,\ell)_{\min}$, $(n,\ell)_{\max}$ are chosen so that $c^{\mathrm{CI}} \le N \le t^{\mathrm{CI}}$.

We summarize the asymptotic properties of the above model in the following straightforward generalization
of Theorem~2.1 in Ref.~\onlinecite{CI2009} on second-period atoms.
The following numbers associated with the non-interacting $N$-electron atom play a role: $n_-(N)$, $n_+(N)$, $c(N)$, $t(N)$ which denote the number of closed shells, closed or open shells, core spin-orbitals, and core or valence spin-orbitals, respectively.
Explicitly\cite{NuclearChargeLimit2009}, $n_-$ and $n_+$ can be expressed in terms of the number of spin-orbitals in the first $n'$ hydrogen shells, 
$f(n'):=\sum_{n=1}^{n'}\sum_{\ell=0}^{n-1} 2\cdot(2\ell+1)$, as the largest integer such that $f(n_-) < N$, respectively the smallest integer such that $f(n_+) \ge N$. One then has $c(N) = f(n_-(N))$, $t(N) = f(n_+(N))$.

\begin{theorem}\label{CItheorem} (Correct asymptotic behaviour) 
The CI model (A), (B), (C) with ${\mathcal D}$ given by (ii), (iii) has the following properties.
In the large nuclear charge limit $Z \to \infty$ for $N$ and $\dim V^{\vect{Z}}$ fixed, the lowest
\begin{equation*}
\binom{\min\left(t(N),t^{\mathrm{CI}}\right)-\max\left(c(N),c^{\mathrm{CI}}\right)}{N-\max\left(c(N),c^{\mathrm{CI}}\right)}
\end{equation*}
eigenvalues $E_1^{\mathrm{CI}}\le E_2^{\mathrm{CI}}\le \dots$ and $E_1\le E_2\le \dots$ (repeated according to multiplicity) of the CI model respectively the Schr\"odinger equation~\eqref{eq:Schroedinger} satisfy
\begin{equation*}
\lim\frac{E^{\mathrm{CI}}_j}{E_j} = 1.
\end{equation*}

If moreover $c^{\mathrm{CI}}\le c(N)$ (i.e. the CI model does not constrain the occupation numbers of any non-core orbitals) and $t^{\mathrm{CI}}\ge t(N)$
(i.e. at least all core and valence orbitals are included in the CI model), then there exist orthonormal CI respectively
Schr\"odinger eigenstates $\psi_i^{\mathrm{CI}}$ and $\psi_i$ corresponding to the above eigenvalues such that 
\begin{equation*}
\lim\norm{\psi_i^{\mathrm{CI}}-\psi_i} = 0,
\end{equation*}
where $\norm{\cdot}$ is the norm on the $N$-electron space $L_a^2((\basespacet)^N)$.

Finally, under the same condition on $c^{\mathrm{CI}}$ and $t^{\mathrm{CI}}$ the spectral gaps satisfy
\begin{equation} \label{eq:gaps}
\lim\frac{\Delta E^{\mathrm{CI}}_j}{\Delta E_j} = 1
\end{equation}
whenever $\Delta E_j > 0$, where $\Delta E^{\mathrm{CI}}_j = E^{\mathrm{CI}}_j - E^{\mathrm{CI}}_1$ and $\Delta E_j = E_j - E_1$ ($j \ge 2$).
\end{theorem}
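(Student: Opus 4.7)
The plan is to adapt the proof of Theorem~2.1 in Ref.~\onlinecite{CI2009} by performing the isoelectronic rescaling $\vect y_i = Z\vect x_i$ and applying Kato--Rellich analytic perturbation theory in $1/Z$ around a non-interacting hydrogenic reference problem. After identifying the relevant degenerate eigenspace with a subspace of the rescaled CI ansatz space, the three assertions will follow from comparing two finite-dimensional perturbation expansions with matching zeroth- and first-order data.

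First I would perform the substitution $\psi(\vect x_1,\dots,\vect x_N) = Z^{3N/2}\tilde\psi(Z\vect x_1,\dots,Z\vect x_N)$, converting \eqref{eq:Schroedinger} into $\tilde H^{(Z)}\tilde\psi = (E/Z^2)\tilde\psi$ with $\tilde H^{(Z)} = \tilde H_0 + \tfrac{1}{Z}\tilde V_{ee}$, $\tilde H_0 = \sum_i(-\tfrac12\Delta_{\vect y_i} - 1/\abs{\vect y_i})$ and $\tilde V_{ee} = \sum_{i<j}1/\abs{\vect y_i-\vect y_j}$. The same substitution sends a Slater orbital of dilation $Z_{n\ell}$ to one of effective dilation $\alpha_{n\ell}:=Z_{n\ell}/Z$, which is the ordinary hydrogenic orbital when $\alpha_{n\ell}=1$. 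Since each two-body term of $\tilde V_{ee}$ is $\tilde H_0$-bounded with relative bound zero (Hardy--Kato), $\tilde H^{(Z)}$ is an analytic family of type~(A) in the coupling $1/Z$.

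The non-interacting reference spectrum is explicit: the lowest $N$-electron eigenvalue $E_0(N) = -\tfrac12\sum_{i\le N} n_i^{-2}$ of $\tilde H_0$ has eigenspace $\mathcal E_0$ spanned by the $\binom{t(N)-c(N)}{N-c(N)}$ Slater determinants that fully occupy shells up to $n_-(N)$ and distribute the remaining electrons over shell $n_+(N)$, and is isolated from the rest of the spectrum. Degenerate Kato perturbation theory then yields convergent expansions $\tilde E_j(Z) = E_0(N) + \lambda_j/Z + O(1/Z^2)$ for the lowest $\dim\mathcal E_0$ eigenvalues of $\tilde H^{(Z)}$, where the $\lambda_j$ are the eigenvalues of $P_0\tilde V_{ee}P_0$ and $P_0$ is the orthogonal projector onto $\mathcal E_0$, together with convergence of the associated spectral projectors.

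Transferring this analysis to the CI problem: for rescaled parameters $\vect\alpha = \vect Z/Z$, $\tilde V^{\vect\alpha}$ is spanned by Slater determinants of $\alpha$-dilated hydrogenic orbitals. Under the full hypothesis $c^{\mathrm{CI}}\le c(N)$ and $t^{\mathrm{CI}}\ge t(N)$ the choice $\vect\alpha=\vect 1$ gives $\mathcal E_0\subset\tilde V^{\vect 1}$, so $\tilde P^{\vect 1}\tilde H^{(Z)}\tilde P^{\vect 1}$ admits the \emph{same} expansion $E_0(N) + \lambda_j/Z + O(1/Z^2)$; step~(C) can only improve these bounds from above, which already yields $E_j^{\mathrm{CI}}/E_j\to 1$. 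The $L^2_a$ convergence of eigenstates then follows because the Kato projectors (true, CI at $\vect\alpha=\vect 1$, CI at $\vect\alpha_\ast$) all converge to the same limiting spectral projector inside $P_0$, and~\eqref{eq:gaps} follows because $\Delta E_j^{\mathrm{CI}}$ and $\Delta E_j$ share the leading term $Z(\lambda_j-\lambda_1)$ whenever this is nonzero and a common higher-order first nonzero coefficient otherwise (using $\Delta E_j>0$). In the general case $c^{\mathrm{CI}}>c(N)$ or $t^{\mathrm{CI}}<t(N)$, the argument proceeds identically with $\mathcal E_0$ replaced by $\mathcal E_0\cap\tilde V^{\vect 1}$, whose dimension is precisely the binomial coefficient in the statement, yielding only the eigenvalue assertion. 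The main obstacle will be to verify that the rescaled optimizers $\vect\alpha_\ast(Z) = \vect Z_\ast(Z)/Z$ remain in a fixed compact set bounded away from zero as $Z\to\infty$, so that the perturbation expansion on $\tilde V^{\vect\alpha_\ast(Z)}$ is uniform enough to upgrade the competitor upper bound at $\vect\alpha=\vect 1$ to a sharp identity at the $1/Z$ level; apart from this variational control the argument is a direct generalization of Ref.~\onlinecite{CI2009}.
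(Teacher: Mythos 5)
Your proposal is correct and follows essentially the same route as the paper, which gives no proof of its own but defers to Theorem~2.1 of Ref.~\onlinecite{CI2009}: isoelectronic rescaling, degenerate Kato perturbation theory in $1/Z$ about the non-interacting ground eigenspace $\mathcal{E}_0$ (replaced by $\mathcal{E}_0\cap\tilde{V}^{\vect{1}}$ when the cutoffs are restrictive), and the min--max comparison with the competitor $\vect{\alpha}=\vect{1}$. The obstacle you rightly flag is closed by noting that the $Z^2$-order coefficient $e_0(\vect{\alpha})$ of the CI energy has a strict minimum at $\vect{\alpha}=\vect{1}$, which forces $\vect{\alpha}_\ast(Z)\to\vect{1}$; your side claim that states with $\lambda_j=\lambda_1$ share a common higher-order gap coefficient is unjustified (the CI truncation does not reproduce second-order coefficients), but the hypothesis $\Delta E_j>0$ is intended to exclude exactly that degenerate case.
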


We emphasize that the above theorem only covers the regime of large $Z$. For neutral atoms, the highest eigenstates in the ansatz space of asymptotics-based CI are typically observed to lie above higher Rydberg states or even above the bottom of the continuous spectrum.

\section{Exact reduction steps and LS diagonalization}
\label{sec:Reduction}

This section explains exact reduction steps which are essential for cutting down the calculation
time and storage requirement of the algorithmic implementation.

\subsection{Tensor product structure of configurations}
\label{sec:ReductionConfigurations}

Our first observation connects $N$-particle configurations (see Section~\ref{sec:FrameworkConfigurations}) to
the \emph{non-antisymmetrized} tensor product of antisymmetrized $d_j$-particle states, preserving the action 
of the angular momentum and spin operators. Here and below, we use the standard notation\cite{GlimmJaffe} 
$\wedge^n V$ for the $n$-fold antisymmetrized tensor product of a vector space $V$, and $V \otimes W$
for the tensor product of two spaces $V$ and $W$.

\begin{proposition}
Consider irreducible representation spaces $V_1, \dots, V_k$ as in equation~\eqref{eq:IrredRepLS}
and particle numbers $d_1,\dots,d_k \ge 0$. Then the following isometric isomorphism holds,
\begin{equation}
\mathcal{C}^{d_1,\dots,d_k} \cong \bigotimes_{j=1}^k \wedge^{d_j} V_j.
\label{eq:ConfigProdSubshells}
\end{equation}
A canonical mapping of basis vectors is given by
\begin{equation*}
\begin{split}
\mathcal{T}:
&\ket{\chi^1_1,\dots,\chi^1_{d_1},\dots,\chi^k_{d_k}} \mapsto\\
&\ket{\chi^1_1,\dots,\chi^1_{d_1}} \otimes \cdots \otimes \ket{\chi^k_1,\dots,\chi^k_{d_k}}
\end{split}
\end{equation*}
with $\chi^j_i \in V_j$ for all $i,j$. Moreover, $\mathcal{T}$ commutes with the action
of the angular momentum and spin operators, i.e.,
\begin{equation*}
\mathcal{T}\left(\vect{L}\,\psi\right) = \left(\sum_{j=1}^k \vect{L}_j\right) \mathcal{T}(\psi),\quad
\mathcal{T}\left(\vect{S}\,\psi\right) = \left(\sum_{j=1}^k \vect{S}_j\right) \mathcal{T}(\psi)
\end{equation*}
for all $\psi \in \mathcal{C}^{d_1,\dots,d_k}$ where $\vect{L}, \vect{S}$ on the left hand sides are $N$-particle operators and each
$\vect{L}_j, \vect{S}_j$ on the right hand side acts on the $d_j$-particle
tensor factor $\wedge^{d_j} V_j$.
\label{prop:IsoConfigTensor}
\end{proposition}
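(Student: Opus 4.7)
The plan is to define $\mathcal{T}$ on the basis of Slater determinants $\ket{\chi^1_1,\dots,\chi^k_{d_k}}$ with $\chi^j_i\in V_j$, check that it is an isometry on basis vectors and extends unitarily to the whole configuration, and finally verify the intertwining property with $\vect{L}$ and $\vect{S}$. A dimension count already confirms that the two sides are compatible: $\dim\mathcal{C}^{d_1,\dots,d_k}=\prod_j\binom{\dim V_j}{d_j}=\prod_j\dim\wedge^{d_j}V_j$.

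For the isometry step I would invoke the Slater-determinant inner product formula $\hprod{\chi_1\cdots\chi_N}{\psi_1\cdots\psi_N}=\det\bigl(\hprod{\chi_i}{\psi_j}\bigr)$. By the mutual orthogonality of the $V_j$ assumed in~\eqref{eq:IrredRepLS}, only inner products between orbitals in the same subshell can be nonzero, so after grouping the rows and columns according to subshell the Gram matrix becomes block-diagonal. Its determinant therefore factors as $\prod_{j=1}^k\det\bigl(\hprod{\chi^j_i}{\psi^j_l}\bigr)$, and each factor is precisely the inner product of the corresponding $d_j$-electron Slater determinant on $\wedge^{d_j}V_j$. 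This coincides with the inner product of the image tensor product under $\mathcal{T}$, so $\mathcal{T}$ preserves inner products on basis vectors; combined with the dimension count, this makes $\mathcal{T}$ unitary.

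For the commutation property I would use that each $V_j$ is invariant under the single-particle operators $\vect{L}(i)$ and $\vect{S}(i)$, which is built into the definition of an irreducible representation space for $\mathrm{Span}\,\{L_x,L_y,L_z,S_x,S_y,S_z\}$. Acting on a basis Slater determinant, $\vect{L}=\sum_{i=1}^N\vect{L}(i)$ is a derivation: it produces a sum of $N$ Slater determinants in which only the $i$th orbital has been replaced by its image. Grouping these terms according to which subshell the affected orbital belongs to, and using the invariance of $V_j$, the map $\mathcal{T}$ sends each partial sum to the tensor product in which only the corresponding factor has been hit by $\vect{L}_j$. The argument for $\vect{S}$ is identical.

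The main obstacle I expect is pedantic sign bookkeeping in the well-definedness and isometry steps: one must verify that the sign picked up by a transposition of $\chi^j_i$ and $\chi^j_{i'}$ within a single subshell matches the sign on the corresponding exterior factor $\wedge^{d_j}V_j$, and more generally that the labelling convention adopted on the left-hand side (all $V_1$-orbitals first, then all $V_2$-orbitals, and so on) is consistent with the tensor-factor ordering on the right. Both reduce to the standard combinatorial identity that the full antisymmetrizer, restricted to permutations preserving the subshell partition, factors as the product of antisymmetrizers on the blocks.
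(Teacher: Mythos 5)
Your proposal is correct and follows exactly the route the paper has in mind: the paper's own proof is literally ``Clear from the definitions,'' and your write-up simply supplies the implicit details --- the block-diagonal Gram matrix coming from the mutual orthogonality of the $V_j$ (giving the isometry), the dimension count $\prod_j\binom{\dim V_j}{d_j}$ (giving bijectivity), and the derivation property of one-body operators together with the invariance of each $V_j$ (giving the intertwining relations). The sign bookkeeping you flag is handled by the fixed subshell-ordered labelling convention (cf.\ the lexicographical ordering in Figure~\ref{fig:lex_slater_ordering}), so there is no gap.
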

\begin{proof}
Clear from the definitions.
\end{proof}
In particular, $\dim(\mathcal{C}^{d_1,\dots,d_k}) = \prod_j \binom{\dim(V_j)}{d_j}$. 
Note that equation~\eqref{eq:ConfigProdSubshells} inherently takes into account the antisymmetrization of 
fermionic wavefunctions, without requiring any additional normalization factors. 
The isometry~\eqref{eq:ConfigProdSubshells} is reflected in the algorithmic implementation by ordering Slater 
determinants lexicographically and arranging coefficients accordingly, see
Figure~\ref{fig:lex_slater_ordering}.

\begin{figure}[ht!]
\begin{center}
\includegraphics[width=0.4\textwidth]{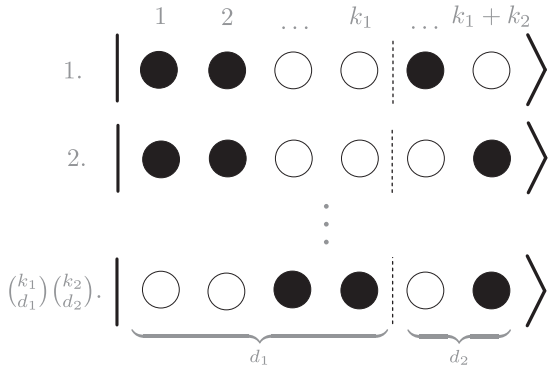}
\end{center}
\caption{Lexicographical ordering of the $\binom{k_1}{d_1}\cdot\binom{k_2}{d_2}$ 
Slater determinants restricted to a fixed configuration involving 
two one-particle subspaces $V_1$, $V_2$ of dimension $k_1$, $k_2$, with $d_1$ particles in 
orbitals $1,\dots,k_1$ and $d_2$ particles in the remaining orbitals $k_1+1,\dots,k_1+k_2$. Filled circles correspond to filled orbitals.
Usefully, restriction to a configuration preserves lexicographical ordering.}
\label{fig:lex_slater_ordering}
\end{figure}

\subsection{LS diagonalization}
\label{sec:LSDiag}

A priori, the diagonalization of the angular momentum and spin (LS) operators seems as expensive
as diagonalizing the Hamiltonian itself, yet it turns out to come at much cheaper costs. It
involves mostly algebra, and can be done prior to setting up the Hamiltonian.
\smallskip

1. Calculate all irreducible LS-eigenspaces for each many-particle subshell.
In more detail, let $u = s,p,d,\dots$ denote the angular momentum subshells in
common chemist's notation and set
\begin{equation*}
V_u := \mathrm{Span}\left\{Y_{\ell m}\!\uparrow,Y_{\ell m}\!\downarrow\right\}_{\ell=\mathrm{ang}(u),m=-\ell \dots \ell},
\end{equation*}
with the spherical harmonics $Y_{\ell m}$. Note that this is an explicit realization of the spaces in Eq.~\eqref{eq:IrredRepLS}. Then, for all $n = 1,\dots,\dim(V_u)$ (equal to $2 \times (2\,\ell + 1)$), decompose the $n$-particle space $\wedge^n V_u$ into the direct sum of irreducible spin and angular momentum representation spaces. That is,
\begin{equation}
\label{eq:IrredSubshell}
\wedge^n V_u = \bigoplus_i V_{uni}
\end{equation}
such that
\begin{equation*}
\begin{split}
\vect{L}^2\,\varphi &= \ell_i \left(\ell_i+1\right)\,\varphi,\\
\vect{S}^2\,\varphi &= s_i \left(s_i+1\right)\,\varphi \quad \forall\,\varphi \in V_{uni},\\
\dim(V_{uni}) &= \left(2\ell_i+1\right)\left(2s_i+1\right).
\end{split}
\end{equation*}
Explicit results are shown in Table~\ref{tab:irredLS}: subshells from $\wedge^4 V_d$ to $\wedge^{10} V_d$ are omitted for brevity's sake, and only states with maximal $L_z$ and $S_z$ quantum numbers are displayed; applying the ladder operators $L_- = L_x - i L_y$ and $S_- = S_x - i S_y$ yields the remaining wavefunctions. Note that symmetry levels can appear twice within a many-particle subshell, e.g., $^2D$ in $\wedge^3 V_d$. In concordance with the Clebsch-Gordan method below, the ordered single-particle orbitals are $L_z$ and $S_z$ eigenstates, denoted by
\begin{align*}
&\left(s, \conj{s}\right) &\text{ for } V_s,\\
&\left(p_1, \conj{p_1}, p_0, \conj{p_0}, p_{n\!1}, \conj{p_{n\!1}}\right) &\text{ for } V_p,\\
&\left(d_2, \conj{d_2}, d_1, \conj{d_1},\dots, d_{n\!2}, \conj{d_{n\!2}}\right) &\text{ for } V_d.
\end{align*}
The highest quantum number appears first, and $\conj{\,\cdot\,}$ equals spin down $\downarrow$ (convention as in Ref.~\onlinecite{NuclearChargeLimit2009}).

The decomposition~\eqref{eq:IrredSubshell} first requires a matrix representation of the angular momentum and spin operators $L_x, L_y, L_z, S_x, S_y, S_z$ on 
$\wedge^n V_u$. Obtain it by starting from the canonical single-particle representation on $V_u$ (spherical harmonics) and writing the $n$-body operator in the form 
$B = \sum_{i,j} b_{ij}\,a^\dagger_i a_j$, where $b_{ij}$ are the coefficients of the single-particle representation and $a^\dagger_i$, $a_j$ are fermionic creation
and annihilation operators. The operators $a^\dagger_i a_j$ map Slater determinants to Slater determinants; thus all entries of their corresponding matrix representation are $0$ or $\pm 1$.

The next task to arrive at~\eqref{eq:IrredSubshell} involves the simultaneous diagonalization of the pairwise commuting operators $\vect{L}^2, \vect{S}^2, L_z, S_z$. We present two alternatives.

\emph{Alternative 1}
\begin{algorithmic}
\STATE apply an algorithm of choice, e.g. Ref.~\onlinecite{SimDiag1993}, for the simultaneous diagonalization of $\vect{L}^2, \vect{S}^2, L_z, S_z$ on $\wedge^n V_u$, denoting the eigenvalues or "quantum numbers" of $\vect{L}^2, \vect{S}^2, L_z, S_z$ by $\ell(\ell+1), s(s+1), m_\ell, m_s$, respectively
\FOR{each subspaces $W$ with $m_\ell = \ell$ and $m_s = s$}
	\STATE choose an ONB $\left\{\varphi_1,\dots,\varphi_r\right\}$ of $W$
	\FOR{$j=1,\dots,r$}
		\STATE add $V_{unj} := \mathrm{Span}\left\{\varphi_j,L_- \varphi_j, S_- \varphi_j, L_- S_- \varphi_j, \dots\right\}$ to the decomposition~\eqref{eq:IrredSubshell}
	\ENDFOR
\ENDFOR
\label{alg:SimDiagDirect}
\end{algorithmic}
Note that the iterative application of the ladder operators $L_-$ and $S_-$ ensures that the resulting subspaces $V_{unj}$ are invariant irreducible representation spaces.

\emph{Alternative 2}
\begin{algorithmic}
\STATE $\mathrm{count} \gets 0$
\FOR{$\ell = 0,1,\dots$ and $s = \left\{\begin{array}{ll}\frac{1}{2},\frac{3}{2},\dots& n \text{ odd}\\0,1,\dots& n \text{ even}\end{array}\right.$}
	\STATE \begin{equation*}\begin{split}X := &\left(\vect{L}^2 - \ell(\ell+1)\,\mathrm{id}\,\vert\, \vect{S}^2-s(s+1)\,\mathrm{id}\,\vert\right.\\ &\quad\left.\vphantom{\vect{L}^2}L_z - \ell\,\mathrm{id} \,\vert\, S_z - s\,\mathrm{id}\right)\end{split}\end{equation*}
	\STATE calculate $W := \mathrm{Kern}\left(X X^\dagger\right) \subset \wedge^n V_u$
	\IF{$W = \emptyset$}
		\STATE next for loop
	\ENDIF
	\STATE choose an ONB $\left\{\varphi_1,\dots,\varphi_r\right\}$ of $W$\\
	\FOR{$j=1,\dots,r$}
		\STATE add $V_{unj} := \mathrm{Span}\left\{\varphi_j,L_- \varphi_j, S_- \varphi_j, L_- S_- \varphi_j, \dots\right\}$ to the decomposition~\eqref{eq:IrredSubshell}
	\ENDFOR
	\STATE $\mathrm{count} \gets \mathrm{count} + r$
	\STATE stop if $\sum_{i=1}^{\mathrm{count}} \dim\left(V_{uni}\right) = \dim\left(\wedge^n V_u\right)$
\ENDFOR
\label{alg:SimDiagKernel}
\end{algorithmic}
The second alternative exchanges the direct diagonalization in alternative 1 for testing all potential eigenvalues (that is, integer or half-integer numbers) with $m_\ell = \ell$ and $m_s = s$. Efficient numerical methods exist for computing the kernel $W$, which take advantage of the sparse structure of the matrix representation.

\begin{table}[ht!]
\centering
\begin{tabular}{|c|c|cc|c|}
\hline
Config&Sym&$L_z$&$S_z$&$\Psi$ $\vphantom{\bigl(}$\\
\hline\hline
$\wedge^{1}V_s$&$^2S$&$0$&$\frac{1}{2}$&$\ket{s}$ $\vphantom{\bigl(}$\\
\hline
$\wedge^{2}V_s$&$^1S$&$0$&$0$&$\ket{s \conj{s}}$ $\vphantom{\bigl(}$\\
\hline
\hline
$\wedge^{1}V_p$&$^2P^o$&$1$&$\frac{1}{2}$&$\ket{p_1}$ $\vphantom{\bigl(}$\\
\hline
$\wedge^{2}V_p$&$^1S$&$0$&$0$&$\frac{1}{\sqrt{3}}\left(-\ket{p_1 \conj{p_{n\!1}}}+\ket{\conj{p_1} p_{n\!1}}+\ket{p_0 \conj{p_0}}\right)$ $\vphantom{\bigl(}$\\
\cline{2-5}
&$^3P$&$1$&$1$&$\ket{p_1 p_0}$ $\vphantom{\bigl(}$\\
\cline{2-5}
&$^1D$&$2$&$0$&$\ket{p_1 \conj{p_1}}$ $\vphantom{\bigl(}$\\
\hline
$\wedge^{3}V_p$&$^4S^o$&$0$&$\frac{3}{2}$&$\ket{p_1 p_0 p_{n\!1}}$ $\vphantom{\bigl(}$\\
\cline{2-5}
&$^2P^o$&$1$&$\frac{1}{2}$&$\frac{1}{\sqrt{2}}\left(\ket{p_1 \conj{p_1} p_{n\!1}}+\ket{p_1 p_0 \conj{p_0}}\right)$ $\vphantom{\bigl(}$\\
\cline{2-5}
&$^2D^o$&$2$&$\frac{1}{2}$&$\ket{p_1 \conj{p_1} p_0}$ $\vphantom{\bigl(}$\\
\hline
$\wedge^{4}V_p$&$^1S$&$0$&$0$&$\frac{1}{\sqrt{3}}\left(-\ket{p_1 \conj{p_1} p_{n\!1} \conj{p_{n\!1}}}-\ket{p_1 p_0 \conj{p_0} \conj{p_{n\!1}}}\right.$ $\vphantom{\bigl(}$\\
&&&&$\left.+\ket{\conj{p_1} p_0 \conj{p_0} p_{n\!1}}\right)$ $\vphantom{\bigl(}$\\
\cline{2-5}
&$^3P$&$1$&$1$&$\ket{p_1 \conj{p_1} p_0 p_{n\!1}}$ $\vphantom{\bigl(}$\\
\cline{2-5}
&$^1D$&$2$&$0$&$\ket{p_1 \conj{p_1} p_0 \conj{p_0}}$ $\vphantom{\bigl(}$\\
\hline
$\wedge^{5}V_p$&$^2P^o$&$1$&$\frac{1}{2}$&$\ket{p_1 \conj{p_1} p_0 \conj{p_0} p_{n\!1}}$ $\vphantom{\bigl(}$\\
\hline
$\wedge^{6}V_p$&$^1S$&$0$&$0$&$\ket{p_1 \conj{p_1} p_0 \conj{p_0} p_{n\!1} \conj{p_{n\!1}}}$ $\vphantom{\bigl(}$\\
\hline
\hline
$\wedge^{1}V_d$&$^2D$&$2$&$\frac{1}{2}$&$\ket{d_2}$ $\vphantom{\bigl(}$\\
\hline
$\wedge^{2}V_d$&$^1S$&$0$&$0$&$\frac{1}{\sqrt{5}}\left(\ket{d_2 \conj{d_{n\!2}}}-\ket{\conj{d_2} d_{n\!2}}\right.$ $\vphantom{\bigl(}$\\
&&&&$\left.-\ket{d_1 \conj{d_{n\!1}}}+\ket{\conj{d_1} d_{n\!1}}\right.$ $\vphantom{\bigl(}$\\
&&&&$\left.+\ket{d_0 \conj{d_0}}\right)$ $\vphantom{\bigl(}$\\
\cline{2-5}
&$^3P$&$1$&$1$&$\frac{1}{\sqrt{5}}\left(-\sqrt{2}\cdot\ket{d_2 d_{n\!1}}+\sqrt{3}\cdot\ket{d_1 d_0}\right)$ $\vphantom{\bigl(}$\\
\cline{2-5}
&$^1D$&$2$&$0$&$\frac{1}{\sqrt{7}}\left(-\sqrt{2}\cdot\ket{d_2 \conj{d_0}}+\sqrt{2}\cdot\ket{\conj{d_2} d_0}\right.$ $\vphantom{\bigl(}$\\
&&&&$\left.+\sqrt{3}\cdot\ket{d_1 \conj{d_1}}\right)$ $\vphantom{\bigl(}$\\
\cline{2-5}
&$^3F$&$3$&$1$&$\ket{d_2 d_1}$ $\vphantom{\bigl(}$\\
\cline{2-5}
&$^1G$&$4$&$0$&$\ket{d_2 \conj{d_2}}$ $\vphantom{\bigl(}$\\
\hline
$\wedge^{3}V_d$&$^2P$&$1$&$\frac{1}{2}$&$\frac{1}{\sqrt{210}}\left(4 \sqrt{3}\cdot\ket{d_2 d_1 \conj{d_{n\!2}}}-2 \sqrt{3}\cdot\ket{d_2 \conj{d_1} d_{n\!2}}\right.$ $\vphantom{\bigl(}$\\
&&&&$\left.-4 \sqrt{2}\cdot\ket{d_2 d_0 \conj{d_{n\!1}}}-\sqrt{2}\cdot\ket{d_2 \conj{d_0} d_{n\!1}}\right.$ $\vphantom{\bigl(}$\\
&&&&$\left.-2 \sqrt{3}\cdot\ket{\conj{d_2} d_1 d_{n\!2}}+5 \sqrt{2}\cdot\ket{\conj{d_2} d_0 d_{n\!1}}\right.$ $\vphantom{\bigl(}$\\
&&&&$\left.+3 \sqrt{3}\cdot\ket{d_1 \conj{d_1} d_{n\!1}}+3 \sqrt{3}\cdot\ket{d_1 d_0 \conj{d_0}}\right)$ $\vphantom{\bigl(}$\\
\cline{2-5}
&$^4P$&$1$&$\frac{3}{2}$&$\frac{1}{\sqrt{5}}\left(-\sqrt{3}\cdot\ket{d_2 d_1 d_{n\!2}}+\sqrt{2}\cdot\ket{d_2 d_0 d_{n\!1}}\right)$ $\vphantom{\bigl(}$\\
\cline{2-5}
&$^2D$&$2$&$\frac{1}{2}$&$\frac{1}{\sqrt{15}}\left(2 \sqrt{2}\cdot\ket{d_2 \conj{d_2} d_{n\!2}}-\sqrt{2}\cdot\ket{d_2 \conj{d_1} d_{n\!1}}\right.$ $\vphantom{\bigl(}$\\
&&&&$\left.+\sqrt{2}\cdot\ket{\conj{d_2} d_1 d_{n\!1}}+\sqrt{3}\cdot\ket{d_1 \conj{d_1} d_0}\right)$ $\vphantom{\bigl(}$\\
\cline{2-5}
&$^2D$&$2$&$\frac{1}{2}$&$\frac{1}{\sqrt{70}}\left(-\ket{d_2 \conj{d_2} d_{n\!2}}-5\cdot\ket{d_2 d_1 \conj{d_{n\!1}}}\right.$ $\vphantom{\bigl(}$\\
&&&&$\left.+3\cdot\ket{d_2 \conj{d_1} d_{n\!1}}+5\cdot\ket{d_2 d_0 \conj{d_0}}\right.$ $\vphantom{\bigl(}$\\
&&&&$\left.+2\cdot\ket{\conj{d_2} d_1 d_{n\!1}}+\sqrt{6}\cdot\ket{d_1 \conj{d_1} d_0}\right)$ $\vphantom{\bigl(}$\\
\cline{2-5}
&$^2F$&$3$&$\frac{1}{2}$&$\frac{1}{2 \sqrt{3}}\left(\sqrt{6}\cdot\ket{d_2 \conj{d_2} d_{n\!1}}-\ket{d_2 d_1 \conj{d_0}}\right.$ $\vphantom{\bigl(}$\\
&&&&$\left.-\ket{d_2 \conj{d_1} d_0}+2\cdot\ket{\conj{d_2} d_1 d_0}\right)$ $\vphantom{\bigl(}$\\
\cline{2-5}
&$^4F$&$3$&$\frac{3}{2}$&$\ket{d_2 d_1 d_0}$ $\vphantom{\bigl(}$\\
\cline{2-5}
&$^2G$&$4$&$\frac{1}{2}$&$\frac{1}{\sqrt{5}}\left(\sqrt{2}\cdot\ket{d_2 \conj{d_2} d_0}+\sqrt{3}\cdot\ket{d_2 d_1 \conj{d_1}}\right)$ $\vphantom{\bigl(}$\\
\cline{2-5}
&$^2H$&$5$&$\frac{1}{2}$&$\ket{d_2 \conj{d_2} d_1}$ $\vphantom{\bigl(}$\\
\hline
\end{tabular}
\caption{Irreducible LS eigenspace decompositions of $\wedge^n V_u$ in Eq.~\eqref{eq:IrredSubshell}, showing states with maximal $L_z$ and $S_z$ quantum numbers only.}
\label{tab:irredLS}
\end{table}

\smallskip
2. Consider $N$-electron configurations $\mathcal{C}^{n_1,\dots,n_k}$ assembled from the above single-particle subshells, with $n_j$ electrons in subshell $j$ (angular momentum $u_j$) such that $N = \sum_j n_j$. Using the decomposition in step 1, simultaneously diagonalize the pairwise commuting operators~\eqref{eq:simLS} acting on $\mathcal{C}^{n_1,\dots,n_k}$. (The parity operator $\hat{R}$ is constant on $\mathcal{C}^{n_1,\dots,n_k}$ anyway and needs no further consideration.) In more detail, the isometry~\eqref{eq:ConfigProdSubshells} and the decomposition~\eqref{eq:IrredSubshell} imply
\begin{equation}
\mathcal{C}^{n_1,\dots,n_k} \simeq \bigoplus_{I = \left(i_1,\dots,i_k\right)} V_I, \quad V_I := \bigotimes_j V_{u_j,n_j,i_j}.
\label{eq:TensorProdConfig}
\end{equation}
By construction, each $V_I$ is uniquely characterized by its eigenvalues with respect to the LS-operators $\vect{L}_j^2$ and $\vect{S}_j^2$ acting on the $j$th tensor factor. Since
\begin{equation*}
\vect{L} = \sum_j \vect{L}_j,\quad \vect{S} = \sum_j \vect{S}_j,
\end{equation*}
all operators
\begin{equation*}
\vect{L}^2,\, \vect{S}^2,\, L_z,\, S_z,\, \hat{R},\, \vect{L}_j^2,\, \vect{S}_j^2\, \quad j=1,\dots,k
\end{equation*}
commute pairwise, and it follows that each $V_I$ is an invariant subspace of the operators~\eqref{eq:simLS}. Thus, the diagonalization can be performed on each $V_I$ independently.

An explicit solution for the diagonalization in case of $k = 2$ is well known in terms of the Clebsch-Gordan coefficients, which can be iteratively extended to higher $k$. We obtain
\begin{equation}
V_I = \bigoplus_{\substack{\ell\,s\,m_\ell\,m_s\\ \abs{m_\ell}\le \ell,\abs{m_s}\le s}} V_{I,\ell\,s\,m_\ell\,m_s}
\label{eq:SimDiagVI}
\end{equation}
such that for all $\varphi \in V_{I,\ell\,s\,m_\ell\,m_s}$,
\begin{align*}
\vect{L}^2\,\varphi &= \ell(\ell + 1)\,\varphi, &L_z\,\varphi &= m_\ell\,\varphi\\
\vect{S}^2\,\varphi &= s(s + 1)\,\varphi, &S_z\,\varphi &= m_s\,\varphi.
\end{align*}
Note that $V_{I,\ell\,s\,m_\ell\,m_s}$ may be zero for some $\ell,s,m_\ell,m_s$.

Assembling equations~\eqref{eq:TensorProdConfig} and~\eqref{eq:SimDiagVI}, we obtain
\begin{equation*}
\begin{split}
\mathcal{C}^{n_1,\dots,n_k} &\simeq \bigoplus_{\ell\,s\,m_\ell\,m_s} V_{\ell\,s\,m_\ell\,m_s},\\
V_{\ell\,s\,m_\ell\,m_s} &:= \bigoplus_I V_{I,\ell\,s\,m_\ell\,m_s}.
\end{split}
\end{equation*}
That is, we have decomposed the configurations into the simultaneous eigenspaces of the angular momentum and spin 
operators~\eqref{eq:simLS}.

\subsection{Restriction to fixed $m_\ell$ and $m_s$} 
\label{sec:RepresentationTheory}

From general results about the angular momentum and spin algebra, it is well known that within an irreducible $\vect{L}^2$-$\vect{S}^2$-eigenspace, the ladder operators $L_\pm = L_x \pm i L_y$ and $S_\pm = S_x \pm i S_y$ traverse the $L_z$ and $S_z$ eigenstates, respectively. Additionally, the ladder operators commute with the Hamiltonian $H$ in~\eqref{eq:Hamiltonian} as well as with the CI Hamiltonian. Thus, in terms of eigenvalue determination, it suffices to restrict to LS eigenstates with fixed $m_\ell$ and $m_s$. We adopt the convention in Ref.~\onlinecite{NuclearChargeLimit2009}, and set $m_\ell \equiv 0, m_s \equiv s$ in the sequel.

\subsection{Reduced density matrices (RDMs)}
\label{sec:RDMs}

In this subsection, we will incorporate RDMs (see e.g. Ref.~\onlinecite{Lowdin1955,Mazziotti2007} and~\onlinecite{Ando1963,*Coleman1963}) into the algorithmic 
framework to gain computational speedups and memory storage savings. In fact, we use RDM's of wavefunction \emph{pairs}. 


For any pair of states $\psi$ and $\chi$ in the $N$-body Hilbert space~\eqref{eq:Schroedinger}, the matrix element of the Hamiltonian~\eqref{eq:Hamiltonian} can be rewritten as
\begin{equation}
\hprod{\chi}{H\psi} = \trace_{\mathcal{H}}\left[h_0\,\gamma_{\ket{\psi}\bra{\chi}}\right] + \trace_{\wedge^2 \mathcal{H}}\left[v_{ee}\,\Gamma_{\ket{\psi}\bra{\chi}}\right],
\label{eq:AvrHrdm}
\end{equation}
where $\gamma_{\ket{\psi}\bra{\chi}}$ and $\Gamma_{\ket{\psi}\bra{\chi}}$ are the one- and two-body reduced density matrices of the $N$-body matrix $\ket{\psi}\bra{\chi}$, respectively. Here $h_0$ is the single-particle (hydrogen-like) Hamiltonian and $v_{ee}$ is the interelectronic Coulomb potential,
\begin{equation}
h_0 = -\frac{1}{2}\,\Delta_{\vect{x}} - \frac{Z}{\abs{\vect{x}}}, \quad v_{ee} = \frac{1}{\abs{\vect{x}-\vect{y}}}.
\label{eq:h0_vee}
\end{equation}
Since these operators are independent of spin, we may effectively "trace out" the spin. With the standard notation
\begin{equation}
\bprod{a b}{c d} := \int_{\R^6} \conj{a(\vect{x}_1)} b(\vect{x}_1)\,\frac{1}{\abs{\vect{x}_1-\vect{x}_2}}\,\conj{c(\vect{x}_2)} d(\vect{x}_2) \ud \vect{x}_1 \vect{x}_2,
\label{eq:CoulombIntDef}
\end{equation}
we obtain
\begin{equation}
\label{eq:EvalSingle}
\hprod{\chi}{H\,\vert\,\psi} = \trace\left[\hat{h}_0 \hat{\gamma}_{\ket{\psi}\bra{\chi}}\right]
+ \trace\left[\hat{v}_{ee}\,\hat{\Gamma}_{\ket{\psi}\bra{\chi}}\right], 
\end{equation}
with
\begin{align}
\label{eq:h0Spintrace}
\left(\hat{h}_0\right)_{i,j} &:= \hprod{i}{h_0 \,\vert\, j},\\
\label{eq:Gamma1Spintrace}
\left(\hat{\gamma}_{\ket{\psi}\bra{\chi}}\right)_{i,j} &:= \sum_{\alpha}
\hprod{i\alpha}{\gamma_{\ket{\psi}\bra{\chi}} \,\vert\, j\alpha},\\
\label{eq:veeSpintrace}
\left(\hat{v}_{ee}\right)_{ij,k\ell} &:= \bprod{i j}{k \ell},\\
\label{eq:Gamma2Spintrace}
\left(\hat{\Gamma}_{\ket{\psi}\bra{\chi}}\right)_{k\ell,ij} &:= \sum_{\substack{\alpha,\beta\\i\alpha < k\beta}} \hprod{j\alpha,\ell\beta}{\Gamma_{\ket{\psi}\bra{\chi}}\,\vert\,i\alpha,k\beta}.
\end{align}
Here $i,j,k,\ell$ denote spatial orbitals and $\alpha,\beta,\gamma,\delta$ are associated spin-parts. The inequality constraint in the last sum refers to lexicographical ordering of spin-orbitals.

By choosing the spatial orbitals real-valued, it follows that $\bprod{i j}{k \ell} = \bprod{j i}{k \ell}$ and $\bprod{i j}{k \ell} = \bprod{i j}{\ell k}$ for all $i,j,k,\ell$. Thus, together with $\bprod{i j}{k \ell} \equiv \bprod{k \ell}{i j}$, it suffices to calculate $\bprod{i j}{k \ell}$ for $i \le j$, $k \le \ell$ and $(i,j) \le (k,\ell)$ (in lexicographical order) only.
\smallskip

For our purposes, the following two features of the above RDM formalism are crucial. First, it avoids having to set up the full $N$-particle
operators $H_0$ and $V_{ee}$, allowing one to work instead with the one- and two-particle operators $h_0$ and $v_{ee}$; this leads to significant
storage savings, see Section~\ref{sec:StorageSaving}. Second, the map from $\psi$ and $\chi$ to
$\hat{\Gamma}_{\ket{\psi}\bra{\chi}}$ is an algebraic coefficient mapping which only depends on the symmetry types of the orbitals (i.e., 
$s$, $p$, $d$, ...) and neither the radial wavefunctions nor the dilation parameters $Z_{n\ell}$. So 
the $\hat{\Gamma}_{\ket{\psi}\bra{\chi}}$ can be precomputed 
for each angular momentum and spin symmetry eigenspace, without any reference to the Hamiltonian. The dilation parameters 
only enter the stage via the Couloumb integrals in $\hat{v}_{ee}$.

\section{Handling Slater orbitals (STO's)}
\label{sec:Orbitals}

\subsection{Orthonormalization}
\label{sec:Orthonormalization}

In this subsection we formalize the orthonormalization calculations for Slater-type orbitals~(STOs) 
employed in Ref.~\onlinecite[equation~(31)]{NuclearChargeLimit2009}. There, only $1s$,$2s$ and $2p$ wavefunctions are considered, whereas here, we handle arbitrary subshells.

More concretely, the wave functions are given by
\begin{equation}
\begin{split}
\psi_{n \ell m}(\vect{x}) &= s_{n\ell}\,r^\ell\,Y_{\ell m}(\vartheta,\varphi) \left(\sum_{i=0}^{n-\ell-1} b_{n\ell,i}\,c_{n\ell,i}\,r^i\right) \mathrm{e}^{-\frac{Z_{n\ell}}{n} r},\\
&r = \abs{\vect x}, \quad \ell = 0,\dots,n-1, \quad n = 1, 2, \dots
\label{eq:Orbitals}
\end{split}
\end{equation}
with $b_{n\ell,i}$ being the $i$th coefficient of the associated Laguerre polynomial $p(r) = L_{n-\ell-1}^{2\ell+1}\left(\frac{2 r}{n}\right)$,
\begin{equation*}
b_{n\ell,i} := \binom{n+\ell}{2\ell+1+i} \frac{(-2/n)^i}{i!}
\end{equation*}
and to-be determined orthogonalization coefficients $c_{n\ell,i} \in \R$ ($i=0,\dots,n-\ell-1$) as well as orthonormalization constants $s_{n\ell} > 0$. Since the spherical harmomics $Y_{\ell m}$ are orthogonal, we may fix the angular momentum quantum numbers $\ell, m$. Now using $\int_0^\infty r^n \mathrm{e}^{-\lambda r} \ud r = \frac{n!}{\lambda^{n+1}}$, orthogonality translates to
\begin{equation}
\begin{split}
0 &= \hprod{\psi_{n \ell m}}{\psi_{k \ell m}}\\
&= \int_0^\infty \int_0^{2\pi} \int_0^{\pi} \conj{\psi_{n \ell m}(\vect{x})}\,\psi_{k \ell m}(\vect{x})\,r^2 \sin\vartheta \ud\vartheta \ud\varphi \ud r\\
&= \conj{s_{n\ell}} s_{k\ell} \sum_{i,j}\,b_{n\ell,i}b_{k\ell,j}\,\conj{c_{n\ell,i}}\,c_{k\ell,j}\, \frac{(i+j+2\ell+2)!}{\left(\frac{Z_{n\ell}}{n} + \frac{Z_{k\ell}}{k}\right)^{i+j+2\ell+3}}\\
&= \conj{s_{n\ell}} s_{k\ell} \hprod{c_{n\ell}}{B_{n\ell} H_{nk}^\ell B_{k\ell}\,c_{k\ell}}
\end{split}
\label{eq:OrbsPerp}
\end{equation}
for all $k = \ell+1,\dots,n-1$. Here we have extended the vectors $(c_{k\ell,i})_{i=0,\dots,k-\ell-1}$ by $c_{k\ell,i} = 0$ for $i \ge k-\ell$. The Hankel matrix $H_{nk}^\ell$ is defined by
\begin{equation*}
H_{nk}^\ell := \left(a_{i+j}^\ell(\lambda)\right)_{i,j}\evalat_{\lambda = \frac{Z_{n\ell}}{n} + \frac{Z_{k\ell}}{k}}, \quad a_i^\ell(\lambda) := \frac{(i+2(\ell+1))!}{\lambda^{i+2\ell+3}}
\end{equation*}
and $B_{n\ell}$ is the diagonal matrix $\mathrm{diag}\left(b_{n\ell,i}\right)_i$. Summarizing Eq.~\eqref{eq:OrbsPerp}, we obtain
\begin{equation}
c_{n\ell} \perp \mathrm{Span}\left(B_{n\ell} H_{nk}^\ell B_{k\ell}\,c_{k\ell}\right)_{k=\ell+1,\dots,n-1},
\label{eq:IterOrthogonal}
\end{equation}
so $c_{n\ell}$ can be calculated iteratively for $n=\ell+1,\ell+2,\dots$, starting from the convention $c_{(\ell+1)\ell,0} = 1$.

Note that the $a_i^\ell$ are the moments of a nonnegative measure $m$ on the positive real axis $\R_+$. Namely, let $\ud m_{\lambda,\ell}(t) = t^{2(\ell+1)}\,\mathrm{e}^{-\lambda t} \ud t$, then
\begin{equation*}
a_i^\ell(\lambda) = \int_{\R_+} t^i \ud m_{\lambda,\ell}(t).
\end{equation*}
The Stieltjes moment problem\cite{Lax2002} states that this is equivalent to the quadratic form given by $H_{nk}^\ell$ being positive.

Once all $c_{n\ell}$ have been obtained, we may plug $k = n$ into~\eqref{eq:OrbsPerp} to calculate the normalization factors $s_{n\ell}$ from
\begin{equation}
1 \stackrel{!}{=} \hprod{\psi_{n \ell m}}{\psi_{n \ell m}} = \abs{s_{n\ell}}^2 \hprod{c_{n\ell}}{B_{n\ell} H_{nn}^\ell B_{n\ell}\,c_{n\ell}}.
\label{eq:OrbsNorm}
\end{equation}

\subsection{One-body integrals}
\label{sec:One-bodyintegrals}
The one-body matrix elements
\begin{equation}
\label{eq:OneBodyInts}
\begin{split}
&\bprod{\psi_{n\ell m}}{\psi_{n'\ell'm'}} := \hprod{\psi_{n\ell m}}{h_0\,\vert\,\psi_{n'\ell'm'}}\\
&= \int_{\R^3} \left(\frac{1}{2}\conj{\nabla\psi_{n\ell m}}\cdot\nabla\psi_{n'\ell'm'} - \frac{Z}{\abs{\vect x}}\,\conj{\psi_{n\ell m}}\,\psi_{n'\ell'm'}\right) \ud^3\vect{x}
\end{split}
\end{equation}
can be evaluated symbolically from Def.~\eqref{eq:Orbitals} by a computer algebra system, via symbolic differentiation and
exact integration in spherical polar coordinates. 

\subsection{Two-body integrals}
\label{sec:Two-bodyintegrals}

\subsubsection{Switching to real-valued, cartesian coordinates}
\label{sec:ToRealCartesian}

As mentioned in Section~\ref{sec:RDMs}, we save computational costs by switching to real-valued spatial orbitals when calculating Coulomb integrals. Thus, for each fixed $\ell$, we apply a unitary base change $U_\ell = (u_{\ell,mm'})_{mm'}$ to the spherical harmonics $Y_{\ell m}$ of degree $\ell$ to obtain real-valued polynomials in Cartesian coordinates,
\begin{equation*}
\begin{split}
Z_{\ell m}(\vect{x}) &:= r^\ell \sum_{m'=\ell,\ell-1,\dots,-\ell} u_{\ell,mm'} Y_{\ell m'}\\
&\stackrel{!}{=} \sum_{p_1+p_2+p_3=\ell} c_{\ell m,\vect{p}} \cdot \vect{x}^{\vect{p}},
\end{split}
\end{equation*}
where $c_{\ell m,\vect{p}} \in \R$, $\vect{x}^{\vect{q}} := \prod_{i=1}^3 x_i^{q_i}$. Pluggin this into
Eq.~\eqref{eq:Orbitals} results in real-valued Slater-type orbitals given by
\begin{equation}
\psi_{n \ell m}(\vect{x}) = s_{n\ell}\,Z_{\ell m}(\vect{x}) \left(\sum_{i=0}^{n-\ell-1} d_{n\ell,i}\,r^i\right) \mathrm{e}^{-\frac{Z_{n\ell}}{n} r},
\label{eq:OrbitalsRealCartesian}
\end{equation}
where we have set $d_{n\ell,i} := b_{n\ell,i}\,c_{n\ell,i}$ to shorten notation.

Concretely, for $\ell = 1$ we adapt Ref.~\onlinecite{NuclearChargeLimit2009} and choose (in this order)
\begin{equation*}
\left(Z_{1 m}(\vect{x})\right)_m = \left(\text{pz},\text{px},\text{py}\right) := \frac{1}{2}\sqrt{\frac{3}{\pi}}\left(x_3,x_1,x_2\right).
\end{equation*}
For $\ell = 2$,
\begin{equation*}
\begin{split}
&\left(Z_{2 m}(\vect{x})\right)_m
= \left(\text{d0},\text{dz},\text{dm},\text{dx},\text{dy}\right) := \frac{1}{4}\sqrt{\frac{15}{\pi}}\\
&\times \left(\frac{2 x_3^2 - x_1^2 - x_2^2}{\sqrt{3}},\, 2 x_1 x_2,\, x_1^2-x_2^2,\, 2 x_2 x_3,\, 2 x_1 x_3\right),
\end{split}
\end{equation*}
or --- figuratively $\text{dx} \sim 2\,y\,z$, $\text{dy} \sim 2\,x\,z$, $\text{dz} \sim 2\,x\,y$, $\text{d0} \sim \left(3\,z^2 - r^2\right)/\sqrt{3}$, and $\text{dm} \sim x^2-y^2$. The corresponding unitary $U_\ell$ read
\begin{equation*}
U_1 = \frac{1}{\sqrt{2}}\left(\begin{smallmatrix}
 0 & \sqrt{2} & 0 \\
 -1 & 0 & 1 \\
 \ii & 0 & \ii
\end{smallmatrix}\right)
\quad\text{and}\quad
U_2 = \frac{1}{\sqrt{2}}\left(\begin{smallmatrix}
 0 & 0 & \sqrt{2} & 0 & 0 \\
 -\ii & 0 & 0 & 0 & \ii \\
 1 & 0 & 0 & 0 & 1 \\
 0 & \ii & 0 & \ii & 0 \\
 0 & -1 & 0 & 1 & 0
\end{smallmatrix}\right)
\end{equation*}
when arranging the spherical harmonics $Y_{\ell m}$ with decreasing quantum number $m$.


\subsubsection{Transformation to Fourier space}

We adapt the idea in Ref.~\onlinecite{NuclearChargeLimit2009} to calculate Coulomb integrals between pairs of spatial orbitals 
by applying Fourier transformation. We use the normalization-factor-free convention
\begin{equation*}
\left(\mathcal{F} f\right)(k) := \int_{\R^n} f(x) \mathrm{e}^{-\ii k \cdot x} \ud x.
\end{equation*}
Given one-electron orbitals $\varphi_1,\varphi_2,...$ with $\varphi_i$ and $\mathcal{F}\varphi_i \in L^2(\R^3) 
\cap L^\infty(\R^3)$, let $f(\vect{x}) := \varphi_i(\vect{x})\,\conj{\varphi_j(\vect{x})}$ and 
$g(\vect{x}) := \conj{\varphi_k(\vect{x})}\,\varphi_\ell(\vect{x})$. Then (see e.g. Ref.~\onlinecite{NuclearChargeLimit2009})
\begin{equation*}
\bprod{\varphi_i \varphi_j}{\varphi_k \varphi_\ell} = \frac{1}{2 \pi^2} \int_{\R^3} \frac{1}{\abs{\vect k}^2} \conj{\left(\mathcal{F} f\right)(\vect{k})} \left(\mathcal{F} g\right)(\vect{k}) \ud^3 \vect{k}.
\end{equation*}
Since we have switched to real-valued Cartesian orbitals in the previous subsection, $f(\vect{x}) = \varphi_i(\vect{x})\,\conj{\varphi_j(\vect{x})}$ can be expanded as
\begin{equation}
\label{eq:cnuq}
f(\vect{x}) = \sum_{\nu=0}^{\nu_{\max}} r^{\nu} \left(\sum_{q_1,q_2,q_3=0}^{q_{\max}} c_{\nu,\vect{q}} \cdot \vect{x}^{\vect{q}}\right)\mathrm{e}^{-\lambda r}, \quad r = \abs{\vect{x}}
\end{equation}
with constants $c_{\nu,\vect{q}}$ and $\lambda > 0$. Directly from the definition of the Fourier transformation, it follows that
\begin{equation}
\left(\mathcal{F} f\right)(\vect{k}) = \sum_{\nu,\vect{q}} c_{\nu,\vect{q}}\,(-1)^\nu \frac{\partial^\nu}{\partial\lambda^\nu}\, \ii^{q_1+q_2+q_3} \frac{\partial^\vect{q}}{\partial\vect{k}^\vect{q}} \left(\mathcal{F} \mathrm{e}^{-\lambda r}\right)(\vect{k}),
\end{equation}
where we have used the notation
\begin{equation*}
\frac{\partial^\vect{q}}{\partial \vect{k}^\vect{q}} := \prod_{i=1}^3 \frac{\partial^{q_i}}{\partial k_i^{q_i}} \quad\text{for each}\quad \vect{q} \in \N_0^3.
\end{equation*}
It is well known that
\begin{equation*}
\left(\mathcal{F}\mathrm{e}^{-\lambda r}\right)(\vect{k}) = \frac{8 \lambda \pi}{\left(\lambda^2 + k^2\right)^2}, \quad k = \abs{\vect{k}}.
\end{equation*}
Thus, precomputing the following integral over polar coordinates
\begin{equation*}
\begin{split}
&I_{\vect{q},\vect{q}'}(\lambda,\lambda')
:= (-\ii)^{q_1+q_2+q_3}\ \ii^{q_1'+q_2'+q_3'} \frac{1}{2\pi^2}\\
&\times \int_0^\infty \int_0^\pi \int_0^{2 \pi}
\left(\frac{\partial^\vect{q}}{\partial\vect{k}^\vect{q}} \frac{8 \lambda \pi}{\left(\lambda^2 + k^2\right)^2}\right)\\
&\quad \left(\frac{\partial^{\vect{q}'}}{\partial\vect{k}^{\vect{q}'}} \frac{8 \lambda' \pi}{\left(\lambda'^2 + k^2\right)^2}\right)
\sin\vartheta\,\ud\varphi \ud\vartheta \ud k
\end{split}
\end{equation*}
we obtain for the spinless Coulomb integrals~\eqref{eq:veeSpintrace} with orbitals~\eqref{eq:Orbitals}
\begin{equation}
\begin{split}
\left(\hat{v}_{ee}\right)_{ij,k\ell} &= \bprod{\varphi_i \varphi_j}{\varphi_k \varphi_\ell} = \sum_{\nu,\nu'} \sum_{\vect{q},\vect{q}'} \,\conj{c_{\nu,\vect{q}}} \,c_{\nu',\vect{q}'} \,\cdot\\
&\quad (-1)^{\nu+\nu'} \frac{\partial^\nu}{\partial\lambda^\nu} \frac{\partial^{\nu'}}{\partial\lambda'^{\nu'}} I_{\vect{q},\vect{q}'}(\lambda,\lambda'),
\end{split}
\label{eq:veeFourier}
\end{equation}
with $c_{\nu,\vect{q}}$, $\lambda$ as in Eq.~\eqref{eq:cnuq} and $c_{\nu',\vect{q}}$, $\lambda'$ the analogous constants for $\varphi_k(x)\overline{\varphi_\ell(x)}$. 

\subsubsection{Application to dilated Slater-type orbitals}
\label{sec:AppCoulombDilOrb}

Taking pairwise products of the wavefunctions~\eqref{eq:OrbitalsRealCartesian} involves the convolution of coefficients,
\begin{equation*}
\begin{split}
f(\vect{x})
&:= \psi_{n \ell m}(\vect{x})\,\conj{\psi_{n'\ell'm'}(\vect{x})}\\
&= s_{n\ell} s_{n'\ell'}\,Z_{\ell m}(\vect{x}) Z_{\ell'm'}(\vect{x})\\
&\times \left(\sum_i \left(d_{n\ell}*d_{n'\ell'}\right)_i r^i\right) \mathrm{e}^{-\left(\frac{Z_{n\ell}}{n}+\frac{Z_{n'\ell'}}{n'}\right)r}
\end{split}
\label{eq:fPairWavefunctions}
\end{equation*}
with the discrete convolution
\begin{equation*}
\left(d_{n\ell}*d_{n'\ell'}\right)_i = \sum_k d_{n\ell,k}\,d_{n'\ell',i-k}.
\end{equation*}
Similar reasoning applies to the product $Z_{\ell m} Z_{\ell'm'}$,
\begin{equation*}
Z_{\ell m}(\vect{x}) Z_{\ell'm'}(\vect{x}) = \sum_{\abs{\vect p}_1=\ell+\ell'} \left(c_{\ell m}*c_{\ell'm'}\right)_{\vect{p}} \cdot \vect{x}^{\vect{p}}.
\end{equation*}
Let
\begin{equation*}
g(\vect{x}) := \conj{\psi_{\tilde{n}\tilde{\ell}\tilde{m}}(\vect{x})}\,\psi_{\tilde{n}'\tilde{\ell}'\tilde{m}'}(\vect{x})
\label{eq:gPairWavefunctions}
\end{equation*}
be another pairwise product of wavefunctions. Then the Coulomb integral of these pairs equals
\begin{equation}
\label{eq:CoulombFactorized}
\begin{split}
&\bprod{\psi_{n \ell m} \psi_{n'\ell'm'}}{\psi_{\tilde{n}\tilde{\ell}\tilde{m}} \psi_{\tilde{n}'\tilde{\ell}'\tilde{m}'}}
= \iint_{\R^3} \frac{\conj{f(\vect{x})} g(\vect{y})}{\abs{\vect{x}-\vect{y}}} \ud^3\vect{x} \ud^3\vect{y}\\
&\stackrel{\eqref{eq:veeFourier}}{=} s_{n\ell}\,s_{n'\ell'}\,s_{\tilde{n}\tilde{\ell}}\,s_{\tilde{n}'\tilde{\ell}'} \sum_i \left(d_{n\ell}*d_{n'\ell'}\right)_i\,\sum_j \left(d_{\tilde{n}\tilde{\ell}}*d_{\tilde{n}'\tilde{\ell}'}\right)_j\\
&\quad\times \sum_{\abs{\vect p}_1=\ell+\ell'} \left(c_{\ell m}*c_{\ell'm'}\right)_{\vect{p}} \sum_{\abs{\vect q}_1=\tilde{\ell}+\tilde{\ell}'} \left(c_{\tilde{\ell}\tilde{m}}*c_{\tilde{\ell}'\tilde{m}'}\right)_{\vect q}\\
&\quad\times (-1)^{i+j} \frac{\partial^i}{\partial\lambda^i} \frac{\partial^j}{\partial\mu^j} I_{\vect{p},\vect{q}}(\lambda,\mu)\evalat_{\lambda=\frac{Z_{n\ell}}{n}+\frac{Z_{n'\ell'}}{n'},\mu=\frac{Z_{\tilde{n}\tilde{\ell}}}{\tilde{n}}+\frac{Z_{\tilde{n}'\tilde{\ell}'}}{\tilde{n}'}}.
\end{split}
\end{equation}

\section{Computing the CI levels and states}
\label{sec:Assembly}

Our overall algorithm for the CI method in Section~\ref{sec:AsymptoticsBased} consists of a symbolic part,
symmetry reduction and reduction to two-body space, and a numerical part, Hamiltonian matrix diagonalization and orbital exponent
optimization.

\subsection{Symbolic precomputation}

The following pre-computational steps will allow us to calculate the matrix representation of the 
Hamiltonian quickly, given plug-in values for the dilation parameters $Z_{n\ell}$.

\begin{enumerate}

\item Compute the simultaneous eigenspaces of the operators \eqref{eq:simLS}, via the algorithm described in 
Section~\ref{sec:LSDiag}. 

\item For any simultaneous eigenspace of the operators~\eqref{eq:simLS}, choose an orthonormal basis $(\psi_1,\dots,\psi_r)$ and calculate the one- and two-particle reduced density matrices $\gamma_{\ket{\psi_i}\bra{\psi_j}}$ and $\Gamma_{\ket{\psi_i}\bra{\psi_j}}$, respectively, of the $N$-particle states $\ket{\psi_i}\bra{\psi_j}$ for all $i,j=1,\dots,r$. Subsequently, trace out the spin part as defined in~\eqref{eq:Gamma2Spintrace} and~\eqref{eq:Gamma1Spintrace} to obtain $\hat{\gamma}_{\ket{\psi_i}\bra{\psi_j}}$ and $\hat{\Gamma}_{\ket{\psi_i}\bra{\psi_j}}$.

\item For the Slater orbitals~\eqref{eq:Orbitals}, calculate symbolic versions of the orthonormalization constants 
$c_{n\ell}$ and $s_{n\ell}$ in Section~\ref{sec:Orthonormalization} via equations~\eqref{eq:IterOrthogonal} 
and~\eqref{eq:OrbsNorm}, respectively. Note that these constants still depend on the dilation parameters $Z_{n\ell}$, 
which will be plugged in at the numerical optimization step below.

\item Calculate symbolic matrix representations \eqref{eq:h0Spintrace} and \eqref{eq:veeSpintrace}
of the single-particle and electron-interaction Hamiltonians $h_0$ and $v_{ee}$, using a computer algebra system
and Eq.~\eqref{eq:OneBodyInts} for $h_0$ and Eq.~\eqref{eq:CoulombFactorized} for $v_{ee}$.
These matrices still depend on the orthonormalization constants $s_{n\ell}$ and $c_{n\ell,i}$ 
from Step 2, and on the dilation parameters $Z_{n\ell}$.
\end{enumerate}

\subsection{Numerical diagonalization and energy minimization}

For any given set of orbital exponents $Z_{n\ell}$, we can now calculate and diagonalize
the matrix representation of the Hamiltonian projected onto any LS-eigenspace, by using the reduced density 
matrix formalism in section~\ref{sec:RDMs}. In mathematical terms,
\begin{enumerate}

\item For a current numerical value of the orbital exponents $Z_{1,0},\, Z_{2,0}, \, Z_{2,1}, ...$, evaluate
the symbolic orthonormalization constants $s_{n\ell}$ and $c_{n\ell,i}$ and the symbolic matrix elements of $\hat{h}_0$ and 
$\hat{v}_{ee}$. 

\item Equation~\eqref{eq:EvalSingle} yields the matrix elements of the Hamiltonian on an LS-eigenspace with
orthonormal basis $(\psi_1,\dots,\psi_r)$, namely,
\begin{equation*}
\hprod{\psi_i}{H\,\vert\,\psi_j} = \trace\left(\hat{h}_0 \hat{\gamma}_{\ket{\psi_j}\bra{\psi_i}}\right) + 
\trace\left(\hat{v}_{ee}\,\hat{\Gamma}_{\ket{\psi_j}\bra{\psi_i}}\right).
\end{equation*}
(Note that it would be theoretically possible but computationally inefficient to carry out this step symbolically.)

\item Obtain the ground state energy $E = \lambda_{\min}(\hprod{\psi_i}{H\,\vert\,\psi_j}_{i,j=1,\dots,r})$

\item Iteratively repeat these steps for different values of the orbital exponents within a suitable optimization routine to minimize 
the ground state energy numerically. (We used a gradient-free simplex search method.)

\end{enumerate}

\section{Cost analysis}
\label{sec:Costs}

In what follows we review the computational speedup of the central algorithmic steps as
compared to operating directly on the full $N$-particle Hilbert space $\wedge^N \mathcal{H}$.

\subsection{Configurations}
\label{sec:CostsConfigurations}

In this paragraph, we quantify the savings by the configuration calculus introduced in
Section~\ref{sec:ReductionConfigurations}. To shorten notation, set $g_j := \dim\!V_{u_j}$,
and assume that the total particle number $N$ is fixed. Thus, the dimension of the full
$N$-particle Hilbert space equals $\binom{\sum g_j}{N}$. Consider configurations
$\mathcal{C}^{n_1,\dots,n_k}$ with $\sum n_j = N$. They partition the Hilbert space,
and accordingly
\begin{equation*}
\begin{split}
\sum_{\substack{n_1,\dots,n_k\\\sum n_j=N}} \dim\left(\mathcal{C}^{n_1,\dots,n_k}\right)
&= \sum_{\substack{n_1,\dots,n_k\\\sum n_j=N}} \prod_j \binom{g_j}{n_j}\\
= \binom{\sum g_j}{N}
&= \dim\left(\wedge^N \mathcal{H}\right)
\end{split}
\end{equation*}
as expected. Now, assume we are given an algorithm of order $\mathcal{O}(\dim^p)$, like, e.g.,
LS diagonalization with $p = 3$. Running this algorithm either applied to all configurations
separately or to the full $N$-particle Hilbert space incurs computational costs of order
\begin{equation}
\sum_{\substack{n_1,\dots,n_k\\\sum n_j=N}}
\dim\left(\mathcal{C}^{n_1,\dots,n_k}\right)^p
\quad \text{as compared to}\quad \dim\left(\wedge^N \mathcal{H}\right)^p.
\label{eq:CostCompareConfigurations}
\end{equation}
In what follows, we derive an estimate of the quotient of these two terms. The
Stirling approximation of factorials and a logarithmic series expansion leads to
\begin{equation*}
\binom{g}{n} \approx 2^{\frac{1}{2}+g} (\pi g)^{-\frac{1}{2}} e^{-\frac{(g-2 n)^2}{2 g}}.
\end{equation*}
Plugging this into the left hand side of~\eqref{eq:CostCompareConfigurations} yields
\begin{equation*}
\begin{split}
&\sum_{\substack{n_1,\dots,n_k\\\sum n_j=N}} \prod_j \binom{g_j}{n_j}^p \approx \idotsint_{-\infty}^\infty \delta\left(N-\sum n_j\right)\\
&\quad \times \prod_j 2^{p\left(\frac{1}{2}+g_j\right)} \left(\pi g_j\right)^{-\frac{p}{2}} e^{-p \frac{(g_j-2 n_j)^2}{2 g_j}} \ud n_1 \cdots \ud n_k.
\end{split}
\end{equation*}
The Fourier transform of these integrals is the pointwise product of the individual Fourier transforms. One obtains
\begin{equation*}
\begin{split}
&\int_{-\infty}^\infty 2^{p \left(\frac{1}{2}+g\right)} \left(\pi g\right)^{-\frac{p}{2}} e^{-p\frac{(g-2 n)^2}{2 g}} e^{-\ii n t} \ud n\\
&\qquad = \left(2/\pi\right)^{\frac{1}{2}(p-1)} p^{-\frac{1}{2}} 2^{p g} g^{\frac{1}{2}(1-p)} e^{-g t (4 \ii p+t)/(8 p)}
\end{split}
\end{equation*}
for each individual transform. Now, the inverse Fourier transform of the pointwise products gives the desired approximation of the left hand side~\eqref{eq:CostCompareConfigurations}, namely
\begin{equation}
\begin{split}
&\left(2/\pi\right)^{\frac{1}{2}(1-k+k p)} p^{\frac{1}{2}(1-k)} \left(g_{\text{prod}}\right)^{\frac{1}{2}(1-p)}\\
&\times 2^{p\,g_{\text{sum}}} \left(g_{\text{sum}}\right)^{-\frac{1}{2}} e^{-p\frac{\left(g_{\text{sum}}-2 N\right)^2}{2 g_{\text{sum}}}}
\label{eq:CostConfigApprox}
\end{split}
\end{equation}
where we have set $g_{\text{sum}} := \sum_j g_j$ and $g_{\text{prod}} := \prod_j g_j$ to shorten notation. Finally, dividing the (Stirling approximated) right hand side of~\eqref{eq:CostCompareConfigurations} by~\eqref{eq:CostConfigApprox} yields the sought-after quotient
\begin{equation*}
\left(\frac{\pi}{2}\right)^{\frac{1}{2}(k-1)(p-1)} p^{\frac{1}{2}(k-1)} \left(\frac{g_{\text{prod}}}{g_{\text{sum}}}\right)^{\frac{1}{2}(p-1)}.
\end{equation*}
Note that this factor is independent of the particle number $N$. It equals $1$ for $p=1$, as expected.

As concrete example, consider Chromium with three active subshells $3p, 3d, 4s$, i.e., all subshells up to $3s$ are completely filled. Thus, in terms of the computation parameters we have $(g_1,g_2,g_3) \equiv (\dim\!V_p, \dim\!V_d, \dim\!V_s) = (6, 10, 2)$, $N_\text{eff} = 12$ (electron number in active orbitals) and algorithmic order $p = 3$, say. Then, the approximated quotient equals $5 \pi^2 \doteq 49.348$, which is close to the exact number $50774322144/938076521 \doteq 54.1$.

\subsection{LS diagonalization for sparse vectors}

In this paragraph, we show that the cost of the decomposition~\eqref{eq:SimDiagVI} essentially scales linearly in the problem size $\dim(V_I)$, assuming a sparse structure of the associated coefficient vectors.

First, consider two irreducible angular momentum eigenspaces $V_1$ and $V_2$ with quantum numbers $\ell_j$ and dimensions $(2\ell_j+1)$ ($j=1,2$, without loss of generality $\ell_1 \ge \ell_2$). Then the Clebsch-Gordan method partitions $V_1 \otimes V_2$ into total angular momentum eigenstates, i.e.,
\begin{equation*}
V_1 \otimes V_2 = \bigoplus_{\ell=\abs{\ell_1-\ell_2}}^{\ell_1+\ell_2} V_{12,\ell}, \quad \dim\left(V_{12,\ell}\right) = 2\ell+1.
\end{equation*}
Each $V_{12,\ell}$ requires the computation of exactly
\begin{equation}
\label{eq:costCG_V12ell}
\begin{split}
&\text{num}_{\text{CG}}\left(V_{12,\ell}\right)\\
&= \left(\ell_1+\ell_2+1\right)\left(2\ell+1\right) - \left(\ell_1-\ell_2\right)^2 - \ell(\ell+1)
\end{split}
\end{equation}
Clebsch-Gordan coefficients and Kronecker products $\varphi_1\otimes\varphi_2$ ($\varphi_j \in V_j$). Due to the mentioned sparse structure, we assume $\mathcal{O}(1)$ cost for each of these Kronecker products. Summing up~\eqref{eq:costCG_V12ell} for all $\ell$ yields
\begin{equation}
\begin{split}
&\sum_{\ell=\abs{\ell_1-\ell_2}}^{\ell_1+\ell_2} \text{num}_{\text{CG}}\left(V_{12,\ell}\right)
= \left(2\ell_2+1\right)\\
&\quad \times \left[\left(2\ell_1+1\right)\left(2\ell_2+1\right)-\frac{1}{3}\left(\left(2\ell_2+1\right)^2-1\right)\right]\\
&\le \left(2\ell_2+1\right) \dim\left(V_1 \otimes V_2\right).
\end{split}
\label{eq:costCG_V12}
\end{equation}
Now consider irreducible angular momentum eigenspaces $V_j$, $j=1,\dots,k$, with respective quantum numbers $\ell_j$. The computational cost of the iterated Clebsch-Gordan method will be dominated by the calculation of the total angular momentum eigenspaces of
\begin{equation*}
\left(\bigoplus_i V_{1,\dots,k-1;\tilde{\ell}_i}\right) \otimes V_k, \quad \dim\left(V_{1,\dots,k-1;\tilde{\ell}_i}\right) = 2\tilde{\ell}_i+1,
\end{equation*}
where each $V_{1,\dots,k-1;\tilde{\ell}_i}$ is an irreducible angular momemtum eigenspace in 
$\bigotimes_{j=1}^{k-1} V_j$ such that $\sum_i (2\tilde{\ell}_i+1) = \dim(V_1 \otimes\cdots\otimes V_{k-1})$. 
According to~\eqref{eq:costCG_V12}, this requires not more than $(2\ell_k+1) \cdot \dim(V_1 \otimes\cdots\otimes V_k)$ 
Clebsch-Gordan coefficients and associated Kronecker products. Thus, in case of all $V_j$ being of uniformly bounded dimension,
i.e., $\ell_1,\dots,\ell_k\le \ell_{\max}$, the cost is of order
\begin{equation}
\text{cost}_{\text{CG}}\left(V_1,\dots,V_k\right) = \mathcal{O}\left(\dim\left(V_1 \otimes\cdots\otimes V_k\right)\right).
\label{eq:costCG}
\end{equation}
So indeed, the cost is (almost) of the order of the problem size.

The analysis for spin states is exactly the same, and the angular momentum and spin operators can be treated independently. Thus, the result~\eqref{eq:costCG} remains valid when considering both angular momentum and spin.

\subsection{Diagonalization of the Hamiltonian}

We now consider the exact reduction steps introduced in subsections~\ref{sec:Symmetries} and~\ref{sec:RepresentationTheory}:
the Hamiltonian can be diagonalized within each LS eigenstate separately, and only states with quantum numbers $m_\ell \equiv 0, m_s \equiv s$ 
need to be taken into account. (Partitioning into configurations is advantageous for the LS diagonalization only, since the Hamiltonian
mixes configurations.) The latter saves a factor of~$(2\ell+1)\cdot(2 s+1)$ states with each $\vect{L}^2$-$\vect{S}^2$ eigenspace.
The former, in the examples in Section~\ref{sec:Applications}, reduces the number of states by a factor of $10^2$ to $10^3$. 

We illustrate the huge cost reduction by the example of the Chromium $^7S$ states with configurations [Ar]\,$3d^j 4s\,4p^k 4d^\ell$ 
such that $j+k+\ell = 5$ (see Section~\ref{sec:Applications}). The dimension of the full CI state space equals $\binom{26}{5} = 65780$ (since 5 valence electrons have to be allocated to 10+6+10 possible orbitals). By contrast, restricting to a typical symmetry subspace of interest, such as $^7S$ (i.e., $\vect{L}^2=0$ and $\vect{S}^2 = 3(3+1)$), reduces the dimension to 98, and taking $S_z$ maximal and $L_z=0$ reduces it further to 14.

\subsection{Storing RDMs instead of N-particle wavefunctions}
\label{sec:StorageSaving}

Even though the number of required wavefunctions has been reduced, each individual $N$-electron wavefunction on a $K$-orbital space still requires, a priori, $\binom{K}{N}$ entries.

First --- as illustrated in Section~\ref{sec:RDMs} --- this cost can be reduced since the components of the
Hamiltonian matrix on a given $N$-particle subspace only requires knowledge of the two-particle density
matrices of any pair of $N$-particle basis functions. These RDMs have $\binom{K}{2}^2 = \mathcal{O}(K^4)$ entries, namely
$(\Gamma_{\ket{\psi}\bra{\chi}})_{ij,k\ell}$ with $1 \le i < j \le K$ and $1 \le k < \ell \le K$.

Second, applying the spinless density matrix defined in equation~\eqref{eq:Gamma2Spintrace} reduces the
number $K$ of single-particle orbitals by one half.

Third, we note that the density matrix typically exhibits a sparse structure, so we actually need far fewer entries. This is related to prior LS diagonalization on the $N$-particle Hilbert space. More precisely, the two-particle RDM of an $N$-particle $\vect{L}^2$--$\vect{S}^2$--$L_z$--$S_z$ eigenstate must commute with these symmetry operators on the two-particle space. Reconsider, for instance, the $^7S$ states of Chromium with configuration [Ar]\,$3d^j 4s\,4p^k 4d^\ell$, $j+k+\ell=5$. A general spinless RDM with orbitals up to $4d$ has $\binom{23 + 1}{2}^2 = 76176$ entries. By contrast, the $14^2=196$ RDM's of the $14$ $^7S$ states with $S_z$ maximal and $L_z=0$ turn out to have, on average, only $94.3$ nonzero entries, the maximum number of nonzero entries which occurs being $648$.

\section{Anomalous filling of 4s and 3d orbitals in transition metal atoms}
\label{sec:Applications}

We have applied the algorithmic framework reported above to the calculation of ground and excited states and levels in 3d
transition metal atoms. These continue to offer substantial computational challenges, due to the irregular filling of 4s versus
3d orbitals, strong correlations, and non-negligible relativistic effects. 

Previous computations have led to different results, depending on the level of theory used. Limitations of single-determinant
Hartree-Fock theory for these atoms are discussed in Ref.~\onlinecite{MelroseScerri}.
Multi-determinant Hartree–Fock (HF) energies for the experimental ground state configurations (but not for competing configurations) are given
in Ref.~\onlinecite{FroeseFischer77} (with the exception of Cr), Ref.~\onlinecite{TwoOpenShells2007} (only for atoms with anomalous filling such as Cr), 
and Ref.~\onlinecite{Bungeetal}. The interconfigurational ordering of $4s^1 3d^n$ versus $4s^2 3d^{n-1}$ is discussed in Ref.~\onlinecite{Kagawa}
for relativistic HF and in Ref.~\onlinecite{HarrisJones,Yanagisawa} for DFT. Among the transition metal series Sc, Ti, V, Cr, Mn, Fe, Co, Ni,
Cu, relativistic HF rendered $4s^1$ stable for Cr, Mn, Fe, Ni, Cu, even though experimentally only Cr and Cu have
a $4s^1$ ground state.\footnote{In fact, for Ni the experimental classification as $4s^2$ should be viewed with some caution. A look
at the actual data\cite{NIST} shows that for Ni, relativistic $J$ splittings are of the same order as the interconfigurational gap, 
and while the experimental ground state is a particular $J$ state of the $4s^2$ ($^3F$) configuration, $4s^1$ ($^3D$) would become stable 
if one averages over $J$ according to multiplicity.} DFT does not fare better, regardless of the type of exchange-correlation
functional used: $4s^1$ is rendered stable by Becke 88 for Ti, V, Cr, Ni, Cu\cite{Yanagisawa}, by the local density approximation 
and Perdew-Wang for V, Cr, Co, Ni, Cu\cite{HarrisJones,Yanagisawa}, and by B3LYP for V, Cr, Co, Ni, Cu\cite{Yanagisawa}.   
The poor atomization energies of DFT functionals such as Becke 88 and B3LYP for transition metal dimers (those 
for Cr$_2$ even come out with the wrong sign) have been associated\cite{Yanagisawa} to poor 
interconfigurational energies of the atoms. It is then of interest to revisit the latter from alternative theoretical points of view. 

Our results for the asymptotics-based CI model (A), (B), (C) in Section~\ref{sec:AsymptoticsBased} are as follows. First,
we considered a minimal model for the third period elements K to Zn with configurations $[Ar]3d^j 4s^k$, that is to say
in the language of Section~\ref{sec:AsymptoticsBased} we choose the cutoffs
\begin{equation*}
(n,\ell)_{\min} = (3,1) = 3p, \qquad (n,\ell)_{\max} = (4,0) = 4s.
\end{equation*}
It turns out that the ground states from Ca to Zn always put two electrons in the 4s subshell, i.e. have configuraton
$[Ar]3d^j 4s^2$. See Table~\ref{tab:3d4s}. Thus minimal asymptotics-based CI coincides with the empirical Madelung rule
(which states that the subshells are filled in the order of increasing $n+\ell$ and, for equal $n+\ell$, in the order of 
decreasing $n$). Experimentally, this means that the method fails for the two anomalous atoms Cr and Cu.

\begin{table}
\centering
\begin{tabular}{|c||c|c|c|c|c|c|}
\hline
Atom&\multicolumn{2}{c|}{Sym}&dim&\multicolumn{3}{c|}{energy [a.u.]} $\vphantom{\Bigl(}$\\
&CI&exp&(subsp)&CI&exp&MDHF\\
\hline
 K&$^2S$&$^2S$&1   & -596.7993  & -601.9337 & -599.16478$\vphantom{\Bigl(}$\\
Ca&$^1S$&$^1S$&2   & -674.2442  & -680.1920 & -676.75818$\vphantom{\Bigl(}$\\
Sc&$^2D$&$^2D$&4   & -756.8908  & -763.8673 & -759.73571$\vphantom{\Bigl(}$\\
Ti&$^3F$&$^3F$&5   & -845.1599  & -853.3503 & -848.40599$\vphantom{\Bigl(}$\\
 V&$^4F$&$^4F$&4   & -939.1657  & -948.8394 & -942.88433$\vphantom{\Bigl(}$\\
Cr&$\mathbf{^5D}$&$^7S$&3   &-1039.0409 &-1050.4914 & -1043.3563$\vphantom{\Bigl(}$\\
Mn&$^6S$&$^6S$&1   &-1144.9715  &-1158.2670 &-1149.8662$\vphantom{\Bigl(}$\\
Fe&$^5D$&$^5D$&1   &-1256.7813  &-1271.6930 &-1262.4436$\vphantom{\Bigl(}$\\
Co&$^4F$&$^4F$&2   &-1374.8903  &-1393.3526 &-1381.4145$\vphantom{\Bigl(}$\\
Ni&$^3F$&$^3F$&1   &-1499.3759  &-1520.6907 &-1506.8709$\vphantom{\Bigl(}$\\
Cu&$\mathbf{^2D}$&$^2S$&1   &-1630.3692   &-1655.1317 &-1638.9637$\vphantom{\Bigl(}$\\
Zn&$^1S$&$^1S$&1   &-1768.0729&--- & -1777.8481 $\vphantom{\Bigl(}$\\
\hline
\end{tabular}
\caption{Ground state symmetries and energies from K to Zn
predicted by minimal asymptotics-based CI (this paper) with active space $[Ar]\,3d^j\,4s^k$
and compared to experimental data\cite{LideCRC2003}. Boldface denotes
deviation from experiment. The dimension of the joint eigenspace of the symmetry operators~\eqref{eq:simLS} which contains the unique ground state with $L_z=0$, $S_z$ maximal is denoted by dim.
Also shown are multi-determinant Hartree-Fock energies for the experimental ground state symmetries\cite{Bungeetal}
(for Ti and Cr see also Ref.~\onlinecite{SeveralOpenShells2005,TwoOpenShells2007}).}
\label{tab:3d4s}
\end{table}

Next, to address this issue we enlarged the CI subspace for the series Ca, Sc, Ti, V, Cr by the higher subshells
$4p$ and $4d$. That is to say we changed the cutoffs to 
\begin{equation*}
(n,\ell)_{\min} = (3,1) = 3p, \qquad (n,\ell)_{\max} = (4,2) = 4d,
\end{equation*}
hence including all configurations [Ar]$4d^j 4s^k 4p^\ell 4d^m$, and restricted to $k=1$ ($4s^1$) and 
$k=2$ ($4s^2$), respectively. In each case, we considered only the L and S values
selected by Hund's rules (i.e., we minimized first $S$ and then $L$, taking into account
one $s$ and $d$ subshell as in the minimal model above), computed the corresponding symmetry
subspaces via the algorithm in Section~\ref{sec:LSDiag}, and determined the associated eigenstates
and energy levels. The results are shown in Table~\ref{tab:GroundstateEnergies}.

\begin{table*}[!ht]
\centering
\begin{tabular}{|c||c|c|c|c||c|c|c|c|c|c|c|c|c|}
\hline
&Sym&Config&dim&$E_\text{CI}$ [a.u.]&$Z_{1s}$&$Z_{2s}$&$Z_{2p}$&$Z_{3s}$&$Z_{3p}$&$Z_{3d}$&$Z_{4s}$&$Z_{4p}$&$Z_{4d}$ $\vphantom{\Bigl(}$\\
\hline\hline
Ca&$^3D$&$4s^1$&2&-674.1634&19.68&17.41&16.13&12.05&10.38&2.83&5.43&-&2.46 $\vphantom{\Bigl(}$\\
&$\mathbf{^1S}$&$4s^2$&1&\textit{-674.2442}&19.68&17.41&16.13&12.10&10.38&-&5.03&-&- $\vphantom{\Bigl(}$\\
\hline
Sc&$^4F$&$4s^1$&3&-756.9381&20.68&18.42&17.15&12.99&11.30&8.26&5.35&-&6.24 $\vphantom{\Bigl(}$\\
&$\mathbf{^2D}$&$4s^2$&2&\textit{-756.9968}&20.68&18.42&17.15&13.06&11.34&10.07&5.31&-&8.46 $\vphantom{\Bigl(}$\\
\hline
Ti&$^5F$&$4s^1$&8&-845.3714&21.68&19.43&18.16&13.89&12.18&9.91&5.51&1.45&7.75 $\vphantom{\Bigl(}$\\
&$\mathbf{^3F}$&$4s^2$&3&\textit{-845.4210}&21.68&19.43&18.16&13.98&12.23&11.30&5.52&-&9.67 $\vphantom{\Bigl(}$\\
\hline
V&$^6D$&$4s^1$&17&-939.5952&22.68&20.44&19.17&14.78&13.04&11.20&5.61&1.88&8.93 $\vphantom{\Bigl(}$\\
&$\mathbf{^4F}$&$4s^2$&8&\textit{-939.6375}&22.68&20.44&19.17&14.86&13.10&12.36&5.70&5.25&10.62 $\vphantom{\Bigl(}$\\
\hline
Cr&$\mathbf{^7S}$&$4s^1$&14&\textit{-1039.7864}&23.68&21.44&20.18&15.64&13.89&12.37&5.67&9.51&10.00 $\vphantom{\Bigl(}$\\
&$^5D$&$4s^2$&17&-1039.7852&23.68&21.44&20.18&15.74&13.95&13.36&5.87&0.93&11.49 $\vphantom{\Bigl(}$\\
\hline
\end{tabular}
\caption{Asymptotics-based CI results with active space [Ar]\,$3d^j 4s^1 4p^k 4d^\ell$ and [Ar]\,$3d^j 4s^2 4p^k 4d^\ell$, respectively (this paper);
boldface denotes the experimental ground state symmetry, and italic font the lower of each pair of calculated energies, 
in exact agreement with the experimental data. Fourth column: Dimension of the symmetry subspace
containing the ground state.}
\label{tab:GroundstateEnergies}
\end{table*}

Despite the smallness of the radial basis set, the predicted ground state configurations and spin and angular momentum quantum numbers
are in full agreement with the experimental data. Physically, interesting insights can be gained from the orbital exponents
in Table~\ref{tab:GroundstateEnergies}, and from the coefficients of the different configurations contained in the ground 
state. First, for Ca, the $4s$ electron is more tightly bound than any $d$ electrons, whereas for Sc, Ti, V, Cr, 
this effect is reversed, in both the $4s^1$ and the $4s^2$ configuration, with $4s$ outside of both $3d$ and $4d$.
Second, considering for instance the $4s^1$ ($^7S$) Cr ground state, the configurations and weight coefficients
of the fourteen contributing basis states spanning the $^7S$, $m_L=0$, $m_S=3$ symmetry subspace of
$3d^j 4s^1 4p^\ell 4d^m$ are as follows:

\begin{small}
\begin{tabular}{l@{$\quad$}r@{.}l@{$ $}l}
$3d^5 4p^0 4d^0$     &0&36&\\
$3d^4 4p^0 4d^1$     &0&63&\\
$3d^3 4p^2 4d^0$     &0&056&\\
$3d^3 4p^0 4d^2$ (2D)&0&31 & and 0.50\\
$3d^2 4p^2 4d^1$ (2D)&0&036& and 0.038\\
$3d^2 4p^0 4d^3$ (2D)&0&17 & and 0.28\\
$3d^1 4p^2 4d^2$ (2D)&0&016& and 0.014\\
$3d^1 4p^0 4d^4$     &0&096&\\
$3d^0 4p^2 4d^3$     &0&0036&\\
$3d^0 4p^0 4d^5$     &0&012&\\
\end{tabular}
\end{small}

\noindent
In particular, no configuration dominates, and the highest weight configuration is not the naively expected $3d^5$
which one would enforce in both single-determinant HF and (L-S-adapted) multi-determinant HF, 
but $3d^44d^1$ (weight 0.63), followed by $3d^34d^2$ (0.59), $3d^5$ (0.36), and $3d^24d^3$ (0.33).
The highest-weight Cr $^7S$ basis function in which one of the $3d$ electrons has migrated to a $4d$ 
orbital is
\begin{equation*}
\begin{split}
(&\ket{3d_2 3d_1 3d_0 3d_{n\!1} 4s 4d_{n\!2}} - \ket{3d_2 3d_1 3d_0 3d_{n\!2} 4s 4d_{n\!1}}\\
+ &\ket{3d_2 3d_1 3d_{n\!1} 3d_{n\!2} 4s 4d_0} - \ket{3d_2 3d_0 3d_{n\!1} 3d_{n\!2} 4s 4d_1}\\
+ &\left.\ket{3d_1 3d_0 3d_{n\!1} 3d_{n\!2} 4s 4d_2}\right)/\sqrt{5},
\end{split}
\end{equation*}
with expressions of similar type for the remaining 13 basis functions.
Despite the simple treatment of radial orbitals here, our results
provide clear evidence of strong $3d$--$4d$ inter-shell correlations in Cr, and suggests (by comparing
energies of Tables~\ref{tab:3d4s} and~\ref{tab:GroundstateEnergies}) a huge, symmetry-reversing, 
correlation energy in Cr of the order of 1 a.u. 

For more quantitative conclusions the
radial basis set used here is too small, as is illustrated by our systematically higher energies compared
to the large-basis MDHF energies in Table~\ref{tab:3d4s}.
Our results constitute, however, an important step 
towards an accurate quantitative computation of the correlation energy. The remaining step, which
lies beyond the scope of the present paper, is to combine
the exact lowest symmetry subspaces delivered by our symmetry reduction algorithm with
high-accuracy, multi-parameter, self-consistent radial orbital optimization routines as 
have been developed for Hartree-Fock theory\cite{HJO00,Bungeetal,TwoOpenShells2007,NIST}.

\section{Conclusions}
\label{sec:Conclusions}

We have developed and implemented an algorithm for CI calculations for atoms which allows full resolution of
valence electron correlations in a large active space, via efficiently automated (and exact) symmetry reduction. Application to
$3d$ transition metal atoms shows that even very small radial basis sets yield the correct qualitative picture of
the electronic structure when all correlations within and between the $3d$, $4s$, $4p$, $4d$ shells 
are fully resolved and when orbital exponents are optimized self-consistently for the actual CI wavefunctions. 
We trace the qualitative accuracy of our results partly to the theoretical fact that the 
asymptotics-based CI method used here yields the correct leading-order asymptotics for the low-lying spectral gaps
in the fixed-$N$, large-$Z$ limit. 

In subsequent work, we aim to obtain an accurate quantitative picture, by combining the careful algorithmic treatment of correlations 
introduced here with suitable large-parameter orbital optimization routines as are used in (numerical or Roothaan-type)
Hartree-Fock theory.

%

\end{document}